\documentclass[letterpaper]{article} 
\usepackage{aaai25}  
\usepackage{times}  
\usepackage{helvet}  
\usepackage{courier}  
\usepackage[hyphens]{url}  
\usepackage{graphicx} 
\urlstyle{rm} 
\usepackage{natbib}  
\usepackage{caption} 
\frenchspacing  
\setlength{\pdfpagewidth}{8.5in}  
\setlength{\pdfpageheight}{11in}  
%
\usepackage{algorithm}
\usepackage{algorithmic}

%
\usepackage{newfloat}
\usepackage{listings}
\DeclareCaptionStyle{ruled}{labelfont=normalfont,labelsep=colon,strut=off} 
\lstset{%
	basicstyle={\footnotesize\ttfamily},
	numbers=left,numberstyle=\footnotesize,xleftmargin=2em,
	aboveskip=0pt,belowskip=0pt,%
	showstringspaces=false,tabsize=2,breaklines=true}
\floatstyle{ruled}
\newfloat{listing}{tb}{lst}{}
\floatname{listing}{Listing}
%
\pdfinfo{
/TemplateVersion (2025.1)
}

\setcounter{secnumdepth}{2} 

%


\title{On the Power of Strategic Corpus Enrichment in Content Creation Games}

\author{
    Haya Nachimovsky, Moshe Tennenholtz \\
}
\affiliations {
    Technion - Israel Institute of Technology, Haifa, Israel \\
    haya.nac@campus.technion.ac.il, moshet@technion.ac.il
}

\usepackage{graphicx}
\usepackage{tikz}
\usetikzlibrary{positioning} 
\usepackage{amsmath}
\usepackage{amssymb}
\usepackage{amsthm}
\usepackage{dsfont}
\usepackage{xcolor}
\usepackage{mathtools}
\usepackage{xspace}
\usepackage{thmtools}
\usepackage{subcaption}

\theoremstyle{plain}
\newtheorem{theorem}{Theorem}[section]
\newtheorem{corollary}[theorem]{Corollary}

\theoremstyle{definition}
\newtheorem{definition}{Definition}

\newcommand{\floor}[1]{\left\lfloor #1 \right\rfloor}
\newcommand{\ceil}[1]{\left\lceil #1 \right\rceil}

\newcommand{\definedas}{\coloneq}


\newcommand{\omt}[1]{}
\newcommand{\firstmention}[1]{\emph{#1}}


\newcommand{\pFloor}{\floor{\frac{1}{p}}}
\newcommand{\plantedDocsSet}{T}

\newcommand{\generalThreshold}{\vec{t}}

\newcommand{\fairPlayer}{fair}
\newcommand{\devTypeZ}{type 1\xspace}
\newcommand{\devTypeZplusOne}{type 2\xspace}
\newcommand{\solo}{solo}

\newcommand{\indicator}[1]{\mathds{1}_{\{#1\}}}

\newcommand{\mymod}[1]{\ (\mathrm{mod}\ #1)}

\begin{document}

\maketitle

\begin{abstract}
Search and recommendation ecosystems exhibit competition among content creators. This competition has been tackled in a variety of game-theoretic frameworks. Content creators generate documents with the aim of being recommended by a content ranker for various information needs. In order for the ecosystem, modeled as a content ranking game, to be effective and maximize user welfare, it should guarantee stability, where stability is associated with the existence of pure Nash equilibrium in the corresponding game. Moreover, if the contents' ranking algorithm possesses a game in which any best-response learning dynamics of the content creators converge to equilibrium of high welfare, the system is considered highly attractive. However, as classical content ranking algorithms, employed by search and recommendation systems, rank documents by their distance to information needs, it has been shown that they fail to provide such stability properties. As a result, novel content ranking algorithms have been devised. In this work, we offer an alternative approach: corpus enrichment with a small set of fixed dummy documents. It turns out that, with the right design, such enrichment can lead to pure Nash equilibrium and even to the convergence of any best-response dynamics to a high welfare result, where we still employ the classical/current content ranking approach. We show two such corpus enrichment techniques with tight bounds on the number of documents needed to obtain the desired results. Interestingly, our study is a novel extension of Borel's Colonel Blotto game.
\end{abstract}

\section{Introduction}

Content creators aim that their content will be recommended by search and recommendation systems when faced with information needs expressed either implicitly or explicitly by the users \cite{compSearch}. In abstract terms, a content creator (publisher) creates a content item (document) with the aim of being selected by a content ranking algorithm to be shown to a user with a particular topic of interest. At first approximation, a content creator aims that his document will be selected for as many topics (e.g., relevant queries) taken from a given set \cite{Haya1arxiv}.

This general setup has been widely acknowledged by the AI community and various models have been advocated. 
In the line of work we consider, content creators aim to promote their content rather than serve as malicious attackers in the system (as in, e.g., \cite{Chen+al:23a,Chen+al:23b,Liu+al:23a}).  
Existing competitive content creation models emphasize the search ecosystem \cite{raifer2017information} and the recommendation ecosystem \cite{ben2018game, FacilityLocationRec}.
Different models also vary in their representation of users and content, which can be explicit~\cite{YaoLSLZWWX23,EilatR23,Jagadeesan0S23,HronKJKD23} or via queries~\cite{Goren_2021}. 
Others discuss a more general notion of topics of interest~\cite{basat2017game}.
However, all call for game-theoretic analysis of strategic content creation games, where publishers aim that some content ranking algorithm will select their content.  

One property that is crucial for the above efforts is the general observation that in order to lead to high user social welfare, we need to care for publishers' utility \cite{mladenov2020optimizing}, and that in order to maintain publishers, the game should converge to some stable deterministic equilibrium~\cite{basat2017game,YaoLSLZWWX23,ben2018game, LearnignDynamicsPaper}. The related stability concept is captured by the requirement that the corresponding content ranking algorithm determines a game with pure Nash equilibrium, a deterministic content profile where unilateral deviations are not beneficial. Moreover, if that equilibrium can be shown to be obtained under best response learning dynamics, the system is especially appealing as any reasonable behavior (starting with any contents and with arbitrary order of modifications) will result in such a stable situation.

Given the above, a central theme of foundational work on strategic recommendation and search ecosystems originates from the fact that existing content ranking algorithms are unstable when facing such competition. Namely, given an information need, recommending the content closest to it (according to some well-crafted distance function estimating relevance) determines a game that does not possess a pure Nash equilibrium, hence might lead to harmful fluctuations.  Indeed, most recent literature on strategic search and recommendation ecosystems attempts to offer alternative content ranking algorithms that do provide such stability \cite{YaoLSLZWWX23,ben2018game,basat2017game}.

One may wonder whether stability can be obtained if we do not modify the content ranking algorithms and stick to the well-crafted classical ones. 
In this work, we introduce and illustrate such general alternative: {\em strategic corpus enrichment}. 
To initiate the study of this approach, we use a model introduced in~\citet{Haya1arxiv}, expanding on studies in~\citet{raifer2017information}) and in~\citet{basat2017game}. 

Given a set of possible topics of interest (queries), each publisher needs to create a fixed-length document with content split among the topics. The content ranker provides a score for each publisher for each topic, where this score is a non-decreasing function of the weight the provider assigns to that topic (while this is typical for most existing systems, one can think about the proportion of news dedicated to the topic or the number of words associated with the corresponding query as naive estimates). For each topic, the publisher with the highest score is selected, with a uniform tie-breaking rule. A publisher strategy is the split of weights among topics, and its utility is the sum of topics it wins \cite{Haya1arxiv}.
While in recommendation systems the model is illustrative for current strategic content creation games, we observe that this model is an instance of one of the most celebrated economic models: the Colonel Blotto game, going back to~\citet{borel1921}.
In this setting, there is a set of battlefields and each colonel needs to split a given fixed budget (army troops) between battlefields. A colonel wins a battlefield if his army is the largest, while aiming to win as many battlefields as possible. Hence, our content creation games are Colonel Blotto games among publishers, with queries as battlefields.
Failures of the classical content ranking algorithm to obtain stable behavior are a direct implication of the lack of pure Nash equilibrium in Colonel Blotto games.

Nevertheless, in current AI ecosystems, we have a possibility that has not been considered so far: the (artificial) generation of static documents. 
If we add a relatively small number of static documents we modify the game. 
If this can be done in a way leading to pure Nash equilibrium where only the original documents win and (original) publishers even reach that equilibrium under any best-response learning dynamics, while obtaining high welfare, we can reach a stable and effective ecosystem while using the classical well-studied content ranking algorithm!

The above is exactly what we manage to obtain in our work.
We begin by introducing the \emph{uniform corpus extension technique} and show that it leads to our desired results. 
This approach reveals that irrespective of the number of topics or the classical content ranker employed, we can guarantee a pure Nash equilibrium and the convergence of any best-response learning dynamics to a welfare-maximizing result by at most doubling the number of documents. Furthermore, our findings establish tight bounds on the necessary number of documents under the problem parameters.  
Building on this foundation, we introduce the \emph{generalized corpus extension technique}, which further reduces the number of documents required to achieve pure Nash equilibrium. However, in that context, best response dynamics do not always converge (so the system itself should recommend to publishers how to split documents among topics to lead to stability).
Our results underscore the effectiveness of using strategic corpus enrichment to achieve stability.

To summarize, our contributions are as follows:
\begin{itemize}
    \item We propose the method of corpus enrichment as a novel strategy for mediators to influence publisher behavior within search and recommendation systems.
    \item We demonstrate that our approach consistently achieves stability (equilibrium), addressing the limitations of prior approaches where pure equilibria rarely exist. Furthermore, we provide tight bounds on the minimum number of documents needed in the corpus enrichment technique to ensure stability.
    \item Our equilibrium profiles not only ensure that publisher exposure remains unaffected but also maximize user social welfare, i.e., they maximize both publisher exposure and user social welfare. 
    \item
    Our method guarantees convergence of best-response dynamics, while without corpus enrichment, they fail to converge despite the presence of a pure equilibrium.
    Additionally, we propose a computationally efficient algorithm for calculating the best response strategies.
\end{itemize}

\section{Model}
\label{sec:model}

\subsection{Multiple-Queries Ranking Game}
Our model is based on the framework introduced by~\citet{Haya1arxiv} for ranking competitions between publishers aiming to be ranked highly for multiple queries. 
The \emph{Multiple-Queries Ranking Game} $G =\langle n, m, p \rangle$ is defined as a $n$-player's game, 
where each player is a publisher who writes and modifies a single document.
Let $Q$ be a query set that contains $m$ queries: $Q = \{q_1,\ldots,q_m\}$.
The term queries may refer to arbitrary content types or to different query aspects. 
A document $d$ is defined as an element in $D \definedas \{(d^1,\ldots,d^m) \in[0,1]^m : \sum_{j=1}^m d^j \le 1\}$.
Each component $d^j$ ($\in [0,1]$) signifies the degree to which the document $d$ focuses on the query $q_j$.

Let $f: [0,1] \rightarrow [0,1]$ be a single peak function, meaning there exists a unique $p \in [0,1]$ such that $f$ is non-decreasing in $[0,p]$ and non-increasing in $[p,1]$.
The retrieval function $r: D \times Q \rightarrow [0,1]$ assigns each document $d~(\in D$) a score with respect to a query $q_j$: $r(d, q_j) \definedas f(d^j)$.
The documents are then ranked in descending order of scores (assuming ties are broken arbitrarily).
Notice that if the function is constant in $[p,1]$ (or in particular, $p=1$) then we have here the classical selection of the document as most relevant to the query. We allow the function to be non-increasing in $[p,1]$ to deal with issues such as keyword stuffing (preventing spam by overloading query terms). None of our results will change given that assumption.

The document $d_i$ published by player $i$ is considered the pure \firstmention{strategy} of player $i$. 
This strategy involves distributing the document's focus across the various queries.
A \firstmention{strategy profile} $s \definedas (d_1,\ldots,d_n)$ is the tuple of pure strategies of the players in the game.
We use $s_{-i} \definedas (d_1, \ldots, d_{i-1}, d_{i+1}, \ldots, d_n)$ to denote the strategies of all players except player $i$.

The \firstmention{utility} of a player $i$ who writes document $d_i$, denoted $U_i(s)$, is defined as the number of queries where $d_i$ achieves the highest rank among all documents. 
A strategy profile $s$ is a (pure) \firstmention{Nash equilibrium} if no player $i$ has an incentive to change their strategy given the strategies $s_{-i}$ of all other players. In other words, modifying their strategy does not lead to higher utility.

\subsection{Corpus-Enriched Ranking Game}

Adopting game-theoretic terminology, we will refer to the content selection mechanism (e.g. search engine, recommendation, system) as a mediator. 
The mediator's goal is to promote the desired behavior of publishers, such as stability and satisfaction of the users.
We propose a novel approach for the mediator to achieve this goal: adding a set of static documents $T$ to the corpus.
With these static documents, the rank of each of the publisher's documents $d_i$ for a query depends not only on the strategies of all other players but also on the static documents. 

The \emph{Corpus-Enriched Ranking Game} is defined as $G_{\plantedDocsSet} = \langle n, m, p, \plantedDocsSet \rangle$. In the game $G_{\plantedDocsSet}$, the rank of each of the publisher's documents $d_i$ for a query depends not only on the strategies of all other players but also on the static documents $\plantedDocsSet$.

Note that publishers derive utility solely from the highest-ranked document for a given query, as any lower ranking yields zero utility.
Furthermore, publishers are indifferent to the identity of the highest-ranked document or any other documents that rank lower within the set $T$.
We can therefore define a simplified version of the game $G_{\plantedDocsSet}$.

\subsection{Simplified Corpus-Enriched Ranking Game}

Let $\generalThreshold_j$ denote the value of the document in $T$ corresponding to the highest ranking score with respect to query $q_j\in Q$. 
The impact of $T$ on publishers' behavior can be captured solely by $\generalThreshold$, where $\generalThreshold_j$ essentially denotes a \emph{threshold} that any publisher must surpass to achieve the highest rank for that query.

We define the \emph{Ranking Game with Thresholds} $G_{\generalThreshold} =\langle n, m, p, \generalThreshold \rangle$ with $\generalThreshold \in [0,1]^m$, where the static documents are represented using a single vector of thresholds. 
Thresholds in our model function similarly to reserved prices in auctions.
In auctions, enforcing the reserved price does not necessitate physical bids but can be achieved by automatically rejecting any bids that fall below this threshold. 
Similarly, the mediator could set a publicly known threshold for each query, ensuring that a document cannot be considered the winner if its score is below this threshold.

\begin{restatable}{lemma}{Translation}
\label{lem:translation}
    For a given $\eta \in \mathbb{N}$, there exists a document set $T$ such that $|T| = \eta$ and the game $G_T = \langle n, m, p, T \rangle$ has a pure equilibrium only if there exists a vector $\Vec{t} \in [0,1]^m$ such that $\|\Vec{t} \|_1 \le \eta$ and the game $G_{\Vec{t}} = \langle n, m, p, \Vec{t} \rangle$ has a pure equilibrium. 
\end{restatable}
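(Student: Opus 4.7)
The plan is to establish the implication constructively: given any set $T$ with $|T|=\eta$ such that $G_T$ has a pure Nash equilibrium, I will build a threshold vector $\vec{t}$ with $\|\vec{t}\|_1 \le \eta$ for which $G_{\vec{t}}$ has the \emph{same} utility function, so the PNE of $G_T$ is automatically a PNE of $G_{\vec{t}}$. The natural construction, aligned with the paper's interpretation of $\vec{t}_j$ as a threshold score, is: for each query $q_j$ choose a document $d^*_j \in T$ that maximizes $f(d^j)$ over $d \in T$ (ties broken arbitrarily) and set $\vec{t}_j := (d^*_j)^j$.

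The bound on $\|\vec{t}\|_1$ follows from a double-counting argument. Assign each query $j$ to its selected document $d^*_j$; this partitions $\{1,\ldots,m\}$ into blocks indexed by $d \in T$ (some possibly empty). Then
\[
\sum_{j=1}^m \vec{t}_j \;=\; \sum_{d \in T} \sum_{j : d^*_j = d} d^j \;\le\; \sum_{d \in T} \sum_{j=1}^m d^j \;\le\; \sum_{d \in T} 1 \;=\; \eta,
\]
using non-negativity of components and the document budget $\sum_j d^j \le 1$. For the utility equivalence, observe that in $G_T$ a publisher $i$ wins query $q_j$ iff $f(d^j_i)$ is the top-ranked score (with tie-breaking) among all players and also beats $\max_{d \in T} f(d^j)$; by our choice of $d^*_j$ this maximum equals $f(\vec{t}_j)$, which is precisely the score a publisher must surpass in $G_{\vec{t}}$. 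Therefore, for every strategy profile the two games induce identical utility vectors, and any PNE of $G_T$ is a PNE of $G_{\vec{t}}$.

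The only delicate point I anticipate is the treatment of ties: the same arbitrary tie-breaking rule, both among publishers and against the static documents, must be used consistently in $G_T$ and $G_{\vec{t}}$ for the utility equality to be exact. Once that convention is fixed the argument is purely combinatorial and independent of the single-peaked shape of $f$, since we only ever invoke the value $f(\vec{t}_j)$ to compare against publishers' scores and never exploit the monotonicity structure of $f$ on either branch of its peak.
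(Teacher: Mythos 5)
Your proposal is correct and takes essentially the same route as the paper: the paper likewise defines $\vec{t}_j$ as the component of the static document achieving the highest score for $q_j$ and bounds $\sum_{j}\vec{t}_j \le \sum_{d\in T}\sum_{j} d^j \le \eta$ via the per-document budget. Your added care about tie-breaking conventions and the exact utility equivalence is a welcome tightening but not a different argument.
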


All proofs of lemmas, theorems, and corollaries presented in this paper are detailed in the supplementary material.
The study of the simplified corpus enrichment game will determine the structure of the documents needed for effective corpus enrichment. This will be used to optimize the number of documents needed to obtain equilibrium. Each such document corresponds to a subset of queries where the sum of thresholds of each query in the subset is no more than $1$. 

We specifically aim at corpus enrichment leading to an equilibrium in which only documents published by original publishers emerge as winners, that is, in all queries, there is at least one publisher who scores higher than all static documents.
In the case of a tie between a publisher's document and a static document, we assume that the mediator favors the document published by a real player, and ties between publishers' documents are broken uniformly.
Therefore, we require $\max_{1\le i\le n} r(d_i, q_j) \ge \generalThreshold_j$ for every query $q_j \in Q$. 

\section{Preliminaries}
\label{sec:preliminaries}

The following theorem provides a full characterization of pure equilibrium existence in the game without corpus extension. 

\begin{theorem}[\citet{Haya1arxiv}]
\label{thm:original-full}
    $G=\langle n, m, p \rangle$ has a pure Nash equilibrium iff $p \le \frac{1}{\max{\{\ceil{\frac{2 \cdot m}{n} - 1}, 1\}}}$.
\end{theorem}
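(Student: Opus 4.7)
The plan is to reduce the continuous strategy space to a combinatorial coverage game, using the single-peak structure of $f$, and then prove the threshold characterization by an explicit construction for sufficiency and a counting-plus-deviation argument for necessity. For the strategy reduction, I would observe that since $f$ is non-increasing on $[p,1]$, any weight $w>p$ on a query is weakly dominated by weight $p$; and weights in $(0,p)$ are only useful to win queries that no opponent contests at weight $\ge p$, where an arbitrarily small $\varepsilon>0$ already wins. Consequently, for PNE analysis every publisher's strategy is summarized by a set $K_i\subseteq Q$ of queries covered at weight exactly $p$ with $|K_i|\le k:=\lfloor 1/p\rfloor$, possibly supplemented by infinitesimal weights on ``steal candidates'' in $Q\setminus K_i$. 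In this reduced description, publisher $i$'s utility is $\sum_{j\in K_i} 1/|W_j|$ plus shares from uncontested queries, where $W_j=\{i:j\in K_i\}$.

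For the sufficiency direction, assume $k\ge\max\{\lceil 2m/n-1\rceil,1\}$, equivalently $n(k+1)\ge 2m$. I would exhibit an explicit PNE by choosing sets $K_1,\ldots,K_n$ with $|K_i|=k$ so that all coverage multiplicities satisfy $|W_j|\in\{c,c+1\}$, where $c=\lfloor nk/m\rfloor\ge 1$ (the bound $nk\ge m$ is implied by the assumption); existence of such a balanced covering follows from a bipartite degree-constrained matching argument. Stability is then checked against two deviation types: (i) reshuffles swapping $B\subseteq K_i$ for $C\subseteq Q\setminus K_i$ of equal size, whose net utility change $\sum_{j'\in C}1/(|W_{j'}|+1)-\sum_{j\in B}1/|W_j|$ is non-positive because every term of the first sum is $\le 1/(c+1)$ while every term of the second is $\ge 1/(c+1)$; and (ii) fractional ``steals'', which fail because every query is already contested at weight $p$ by at least one opponent.

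For the necessity direction, I would suppose a PNE exists while $k<\lceil 2m/n-1\rceil$, i.e., $n(k+1)<2m$, and derive a contradiction. Standard arguments yield $|K_i|=k$ for all $i$ (otherwise some publisher has spare budget to cover an additional query at weight $p$) and $|W_j|\in\{c,c+1\}$ for some $c\ge 0$ (otherwise a single swap strictly improves some publisher's utility). The case $c=0$ is ruled out because an uncovered query can be stolen at vanishing weight, forcing $c\ge 1$ and hence $nk\ge m$. Closing the remaining gap to the tight bound $nk\ge 2m-n$ is the crux: under $n(k+1)<2m$, a counting argument forces some publisher $i$ to have all queries in $K_i$ at the higher multiplicity $c+1$ while some query outside $K_i$ has the lower multiplicity $c$; this asymmetry lets publisher $i$ free a small amount of peak-weight budget on one query in $K_i$ and redistribute it fractionally so as to win more additional shares than it loses, contradicting PNE.

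The main obstacle is the necessity side, specifically the identification and correctness of the profitable fractional deviation under the tight counting bound $n(k+1)<2m$. Unlike purely combinatorial Blotto arguments that would only give a threshold of $\lceil m/n\rceil$, the single-peak structure lets publishers mix peak-weight commitments with infinitesimal steals, and careful bookkeeping of this interaction is what produces the sharper $\lceil 2m/n-1\rceil$ threshold.
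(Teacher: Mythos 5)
There is a genuine gap, and it sits exactly where the paper's own (cited) proof sketch puts the weight. Note first that the paper does not prove this statement --- it is imported from \citet{Haya1arxiv} --- and the only guidance it gives is that two conditions are necessary for a PNE when $n<m$: the winning value equals $p$ in every query, and no player is the \emph{unique} winner in more than one query. Your reduction to a coverage game with sets $K_i$ plus ``infinitesimal steals'' discards precisely the object that makes the second condition bite: the runner-up value $w_j(s_{-i})$ in a uniquely won query. Sub-peak weights are not only steal candidates; they are \emph{defenses}. Writing $k=\lfloor 1/p\rfloor$, in the tight regime $m\le nk<2m$ every profile with winning value $p$ everywhere has $2m-nk>0$ queries with a unique winner, so your stability check (ii) (``every query is already contested at weight $p$ by at least one opponent'') is false exactly where it is needed. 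The dangerous deviation is neither a pure swap nor a pure steal but a mixture: the unique winner of $q_j$ lowers his weight there to just above $w_j(s_{-i})$, keeps his ties, and spends the freed budget plus his slack $1-kp$ to reach $p$ on a new query and gain an extra half-share. Concretely, for $n=2$, $m=3$, $p=0.45$ (so $k=2$ and the theorem promises a PNE), your balanced cover $K_1=\{q_1,q_3\}$, $K_2=\{q_2,q_3\}$ with all weights $0.45$ is not an equilibrium: player $1$ drops $q_1$ to $\varepsilon$, frees $0.45-\varepsilon$, adds his slack $0.1$, and ties $q_2$ at $0.45$. The correct construction must route each player's slack $1-kp$ onto \emph{someone else's} uniquely won query so that $w_j(s_{-i})\ge 1-kp$ there, which is feasible exactly when the number of uniquely won queries, $2m-nk$, is at most $n$ --- i.e., exactly at the theorem's threshold $n(k+1)\ge 2m$. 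Your balanced-multiplicity condition $|W_j|\in\{c,c+1\}$ also does not by itself guarantee that each player uniquely wins at most one query, which must be arranged separately.

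The same defense-budget accounting resolves what you flag as the ``crux'' of necessity, where your multiplicity counting cannot work: a swap between a multiplicity-$(c+1)$ and a multiplicity-$c$ query is utility-neutral, so no contradiction comes from swaps alone. Instead one shows that a player with $\beta_i\ge 2$ unique wins always has the mixed deviation above (even against maximal defense, reducing both unique wins to $1-kp+\varepsilon$ frees $(k+2)p-1-2\varepsilon\ge p$ since $p>\tfrac{1}{k+1}$), so $\beta_i\le 1$ is necessary; combined with $w_j(s)=p$ everywhere this gives $nk=\sum_j h_j(s)\ge 2m-\sum_i\beta_i(s)\ge 2m-n$, which is the stated bound. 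So your high-level framing (coverage sets, the algebra $k\ge\lceil 2m/n-1\rceil\Leftrightarrow n(k+1)\ge 2m$, swap-stability of balanced covers) is consistent with the cited result, but both directions are missing the defense mechanism of residual budget, and without it the sufficiency construction is demonstrably not an equilibrium and the necessity bound does not close.
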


In the case where $n<m$, a pure equilibrium rarely exists. Specifically, a pure equilibrium exists only when the peak value is relatively small. 
Henceforth, we assume $n<m$ since in the case where $m\le n$, a pure equilibrium always exists.

The main idea of the proof of Theorem~\ref{thm:original-full} for the case where $n<m$ was to show that two properties are necessary for a pure equilibrium to exist: 1. the winner’s value must match the peak value across all queries 2. no player is the unique winner in more than one query.

\paragraph{Notations}
Given some strategy profile $s$, denote by $w_j(s)$ the value of the document ranked highest in query $q_j \in Q$.
The number of documents written by original players assigned the same highest retrieval score for query $q_j$ when $s$
is played is denoted by $h_j(s)$.
A player $i$ is called a winner in query $q_j$ if $t \le d_i^j = w_j(s)$.
For every player $i$, denote by $J_i(s)$ the set of queries where $i$ is a winner, i.e.; $w_i(s)=d_i^j$. 
Let $J_i^{\solo}(s)$ and $J_i^{tie}(s)$ denote the sets of queries where $i$ wins uniquely and in a tie, respectively.

\section{Uniform Corpus Enrichment}
\label{sec:uniform}

We begin our analysis by examining a subset of games with uniform $\generalThreshold_j$ across all queries, i.e., there exists some $t \in [0,1]$ such that $\generalThreshold_j=t$ for every $q_j \in Q$. With slight abuse of notation, we denote the game by $G_{t}=\langle n, m, p, t\rangle$; and show that even within this constrained framework, it is possible to attain equilibrium by adding a relatively small number of static documents.

\subsection{Equilibrium}

Based on the results of Theorem~\ref{thm:original-full}, we identify the minimum threshold required to ensure the existence of an equilibrium where the winning value matches the peak for all queries.
In this context, the threshold's purpose is to prevent a player from being able to win in at least one more query. 

\begin{restatable}{theorem}{MediumPeakUniform}
\label{thm:n-players-medium-peak-uniform}
$G_{t}=\langle n, m, p, t\rangle$ with $\frac{1}{\ceil{\frac{2m}{n}} - 1} < p \le \frac{1}{\ceil{\frac{m}{n}}}$ has a pure equilibrium iff
    \begin{equation*}         
        t > p - \frac{p \cdot (\pFloor+1) - 1}{ \ceil{\frac{2m}{n}} - \pFloor}
    \end{equation*}
    Moreover, 
    there exists a pure equilibrium $s^*$ where 
    $w_j(s^*) = p$ for all $q_j \in Q$.
\end{restatable}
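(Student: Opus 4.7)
The plan is to prove both directions by a structural analysis of candidate equilibria. Let $k \definedas \pFloor$ and $M \definedas \ceil{\frac{2m}{n}}$; the hypotheses imply $M \ge k + 2$ and $2m - nk > 0$, so the ``symmetric'' split $(s_i, \tau_i) = (M - k,\, 2k - M)$ has both coordinates strictly positive.

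For sufficiency (``if''), I would explicitly build an equilibrium $s^*$ with $w_j(s^*) = p$ for every $j$. Distribute $nk$ win-incidences over the $m$ queries so that $nk - m$ queries become 2-way ties and the remaining $2m - nk$ are solos, assigned so that each player wins exactly $k$ queries with $(s_i,\tau_i)$ as close as possible to $(M - k,\,2k - M)$; every player places value $p$ on their $k$ winning queries and $0$ elsewhere, yielding $U_i(s^*) = s_i + \tau_i/2 = m/n$. The central verification is that the most profitable candidate deviation of player $i$ is a ``cheap-solo, buy-tie'' strategy: lower the value on each of the $s_i$ solo queries to just above $t$ (still a solo win because the competitors have $0$ there), keep value $p$ on the $\tau_i$ tied queries, and spend the freed budget $s_i(p - t)$ to place value $p$ on extra non-winner queries, gaining utility $1/2$ per new tie. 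The number of feasible additions is $\floor{(1 - s_i t - \tau_i p)/p}$; a short algebraic computation shows this quantity vanishes precisely when $t > (1 - (\tau_i + 1) p)/s_i$, and substituting $(s_i, \tau_i) = (M - k,\,2k - M)$ simplifies the right-hand side to $p - (p(k+1) - 1)/(M - k)$. All other deviations are easily dominated: swapping a solo (utility $1$) for any tie (utility $\le 1/2$) strictly loses, values above $p$ only weakly reduce the score by single-peakedness, and intermediate values $v \in (t, p)$ waste budget relative to the two extremes.

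For necessity (``only if''), I would suppose $s^*$ is a pure equilibrium and derive the bound. Mirroring Theorem~\ref{thm:original-full}, I would first show $w_j(s^*) = p$ for every $j$ (otherwise either the winner shaves value when $w_j > p$, or a non-winner steals the query with value $p$ when $w_j < p$), and then argue $|J_i| = k$ for every player, since any player with $|J_i| < k$ has at least $p$ budget to spare and strictly gains by placing value $p$ on another player's winner query to create a new tie. Counting coverage yields $\sum_i \tau_i = (nk - m) + |\{j : h_j \ge 2\}| \le 2(nk - m)$, with equality when every tie is 2-way, so by averaging some player $i^*$ has $\tau_{i^*} \le \floor{2(nk - m)/n} = 2k - M$ and therefore $s_{i^*} \ge M - k$. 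Applying the no-deviation inequality to $i^*$ combined with monotonicity of $\tau \mapsto (1 - (\tau + 1)p)/(k - \tau)$ (decreasing since $p(k+1) > 1$) then yields $t > (1 - (2k - M + 1)p)/(M - k) = p - (p(k+1) - 1)/(M - k)$.

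The main obstacles I foresee are: (i) executing the combinatorial construction of $s^*$ when $2m/n$ is not an integer, so that win-incidences partition correctly and each player's $(s_i, \tau_i)$ split respects the no-deviation bound; (ii) ruling out alternative deviations in the sufficiency direction -- especially those that abandon some equilibrium wins to attack a different query set, or that exploit ties of multiplicity larger than two; and (iii) in the necessity direction, handling the case of multi-way ties ($h_j > 2$), which alters the coverage arithmetic and requires verifying that the resulting bound still matches the symmetric 2-way formula.
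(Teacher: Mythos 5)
Your proposal follows essentially the same route as the paper's proof: the equilibrium with $w_j(s^*)=p$ everywhere and each player winning $k=\pFloor$ queries split between solos and two-way ties, the same critical ``shave solos to $t$, keep ties at $p$, buy one more query'' deviation yielding the budget condition $1 < (k-\beta_i)p + \beta_i t + p$, and the same averaging argument showing some player has at least $\ceil{\frac{2m}{n}}-k$ solo wins, which gives the stated bound. Your $(s_i,\tau_i)$ is the paper's $(\beta_i, k-\beta_i)$, and the algebra matches.
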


The proof idea is to show that the first property must still hold (regardless of $t$), and then derive a bound on the threshold that will prevent a player from having a profitable deviation even when he is the unique winner in more than one query, i.e., the second property doesn’t hold.

For larger peak values, achieving an arrangement in which the winners in all queries consistently reach the peak becomes impossible. 
As a result, any potential equilibrium will inevitably have at least one query in which the winning value is lower than the peak. 
This introduces additional complexities within the game. 

Consider a scenario where a player is one of the winners in a particular query but not the sole winner. 
If the winning value for this query is below the peak value, even the smallest amount of excess resources enables the player to deviate profitably by becoming the sole winner of that query.
In contrast, if the winning value of the query already equals the peak value, then no strategy would allow this player to become the sole winner in that query.
Additionally, if a player is tied in three or more queries where the winning value is below the peak, they invariably have a profitable deviation available, where they can become the sole winner in all but one of these queries.

Despite these additional difficulties, our analysis reveals that there exists a specific threshold level at which achieving equilibrium remains feasible. 
Interestingly, this threshold remains constant and does not vary with $p$ (except for the conditions that define its range). 

\begin{restatable}{theorem}{LargePeakUniform}
\label{thm:n-players-large-peak-uniform}
$G_{t}=\langle n, m, p, t\rangle$ where $\frac{1}{\ceil{\frac{m}{n}}} < p$ has a pure equilibrium iff there exists some $\epsilon>0$ such that
    \begin{equation*}  
        t\ge \begin{cases}
            \left (1-\frac{n}{m} \right) \cdot \frac{n}{m} + \epsilon, & m \pmod{n} = 0 \\
            \frac{1}{\ceil{\frac{m}{n}}}, & \text{otherwise} \\
        \end{cases}
    \end{equation*}
    Additionally, there is a pure equilibrium $s^*$ where $w_j(s^*) = \frac{1}{\ceil{\frac{m}{n}}}$ for all $q_j \in Q$.
\end{restatable}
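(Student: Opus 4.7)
The plan is to prove both implications of the iff and to exhibit the equilibrium $s^*$, splitting the argument into the divisible case $m = kn$ (where $k = \lceil m/n\rceil$) and the non-divisible case $m = kn + r$ with $0 < r < n$ (where $k+1 = \lceil m/n\rceil$).

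For sufficiency, I would construct $s^*$ with $w_j(s^*) = 1/\lceil m/n\rceil$ explicitly. In the divisible case, partition $Q$ into $n$ groups of size $k$ and let player $i$ place value $1/k$ on its group and $0$ elsewhere; each player uses budget exactly $1$ and solo-wins its $k$ queries. In the non-divisible case, I would arrange $n - r$ two-way ties together with $m - (n - r)$ solo wins so that every player wins exactly $k + 1$ queries at value $1/(k+1)$ and exhausts its budget. Since the tie-breaking rule favors real players over static documents, $d_i^j \ge t$ already suffices to beat the threshold. To rule out profitable unilateral deviations, I will focus on the canonical exploit: lower the current winning values down to $t$ (saving $1 - kt$ or $1 - (k+1)t$ of budget) and spend the savings creating new ties in queries held by other players, each costing at least $1/\lceil m/n\rceil$. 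The stated lower bound on $t$ forces the number of affordable new ties to be $0$. A short dominance argument then shows that every other deviation (abandoning won queries, invading at a value strictly above $1/\lceil m/n\rceil$, or placing mass in the non-increasing part of $f$) is weakly worse than a variant of the canonical exploit and hence also unprofitable.

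For necessity, I would suppose a pure equilibrium $s$ exists under some $t$ strictly below the stated bound and extract a profitable deviation to derive a contradiction. The main tool will be the aggregate budget inequality $\sum_j h_j(s) w_j(s) \le n$, combined with $\sum_j h_j(s) \ge m$ and $w_j(s) \ge t$; together with $p > 1/\lceil m/n\rceil$, these give that not every winner can sit at the peak. A pigeonhole argument then extracts a player $i$ whose winning set $J_i(s)$ has size at least $\lceil m/n\rceil$ and whose winning values are at most $1/\lceil m/n\rceil$. I will then replicate the canonical exploit from the sufficiency step: drop $d_i^j$ down to $t$ on $J_i(s)$ and spend the freed budget on at least one new tie, which is exactly affordable due to the shortfall of $t$ below the critical value, strictly increasing $i$'s utility and contradicting equilibrium. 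In the non-divisible case, a second exploit is available, namely, breaking an existing tie by adding $\epsilon$ to one tied coordinate and financing it by dropping another winning coordinate to $t$; this deviation is affordable precisely when $t < 1/(k+1)$, which is what forces the strictly higher lower bound in the non-divisible regime.

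The hard part will be the necessity direction. Unlike sufficiency, where $s^*$ is explicit, necessity must rule out arbitrary equilibrium patterns: heterogeneous $w_j(s)$, mixed solo and tied wins, and the possibility that a few queries are won at the peak value $p$ while others are below it. Establishing enough rigidity on the equilibrium (essentially, approximate uniformity of $w_j(s)$ and the existence of a player amenable to the canonical exploit) before constructing the deviation is the technical core of the argument. The non-divisible subcase is especially delicate because forced ties already appear in the constructed equilibrium, so one must carefully distinguish ties that are structurally unavoidable from ties whose existence signals an affordable profitable swap.
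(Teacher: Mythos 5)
Your overall route—an explicit profile with $w_j(s^*)=1/\lceil m/n\rceil$ for sufficiency, and a budget/pigeonhole argument plus a canonical deviation for necessity—is the same as the paper's, and your sufficiency argument (including the divisible-case condition $z_1t+\tfrac{1}{z_1}>1$ with $z_1=m/n$, and the tie-saturated construction for the non-divisible case) matches the paper's computation.

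The gap is in the necessity direction for the non-divisible case. There you extract a player $i$ with $|J_i(s)|\ge\lceil m/n\rceil=\lfloor m/n\rfloor+1$ and propose that \emph{he} drop his winning coordinates to $t$ and buy a new tie. But his freed budget is at most $1-(\lfloor m/n\rfloor+1)\,t$, which tends to $0$ as $t\uparrow 1/(\lfloor m/n\rfloor+1)$, while a new tie costs at least $\min_j w_j(s)\ge t$, which stays bounded away from $0$; so this exploit is not ``exactly affordable'' for that player and cannot close the gap between $t<1/(\lfloor m/n\rfloor+1)$ and non-existence. The deviations that actually work split into two cases: if the large-set player is tied somewhere, he breaks his ties (your ``second exploit,'' which is exactly the paper's argument that any equilibrium containing a player with $|J_i(s)|=\lfloor m/n\rfloor+1$ and $U_i(s)<\lfloor m/n\rfloor+1$ forces $t\ge 1/(\lfloor m/n\rfloor+1)$); if instead he solo-wins all $\lfloor m/n\rfloor+1$ queries, then some \emph{other} player wins at most $\lfloor m/n\rfloor$ queries, has slack $1-\lfloor m/n\rfloor\, t>1/(\lfloor m/n\rfloor+1)\ge\min_j w_j(s)$, and can afford the new tie in the large-set player's cheapest query. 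So the ingredients you list do suffice, but the assignment of exploit to deviator must be swapped; as written, your primary necessity step fails precisely in the regime it is supposed to handle. (Both you and the paper leave the full rigidity argument against heterogeneous equilibria informal; you at least flag it explicitly as the remaining technical core.)
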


In the case where the number of queries is exactly divisible by the number of players, there exists some arrangement where there are no ties between the players. 
Specifically, there are no ties where the winning value is lower than the peak. 
However, if $m \pmod{n} > 0$, it is inevitable to have at least one query with more than one winner. 
This situation necessitates additional constraints, as previously mentioned.

We found the minimal threshold required for the existence of an equilibrium for any number of players, queries, and peak value.
It can be concluded that for any game $G_{t}$, the minimum number of documents required to reach equilibrium does not exceed the number of players. 
This finding holds true regardless of the number of queries, even as $m \rightarrow \infty$, indicating that merely doubling the number of documents is sufficient to stabilize the system.

\begin{restatable}{corollary}{NDocsEquilibriumUniform}
Given any number of players $n\in \mathbb{N}$, number of queries $m \in \mathbb{N}$ and peak value $p \in (0,1]$, there exists a document set $\plantedDocsSet$ such that $|\plantedDocsSet|\le n$, for which the game $G=\langle n, m, p, \plantedDocsSet\rangle$ has a pure equilibrium.
\end{restatable}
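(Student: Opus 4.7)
The plan is to case-split on the three regimes of $p$ identified in Theorems~\ref{thm:original-full},~\ref{thm:n-players-medium-peak-uniform}, and~\ref{thm:n-players-large-peak-uniform}, and in each regime exhibit a uniform threshold $t$ that simultaneously satisfies the equilibrium condition of the relevant theorem and is realizable by a corpus of at most $n$ documents. In the ``small peak'' regime $p \le 1/(\ceil{2m/n}-1)$, Theorem~\ref{thm:original-full} already yields a pure equilibrium without any enrichment, so $T = \emptyset$ suffices and $|T| = 0 \le n$.

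In the remaining two regimes I will use the single threshold value $t_0 \definedas 1/\ceil{m/n}$. For the medium-peak range $1/(\ceil{2m/n}-1) < p \le 1/\ceil{m/n}$, Theorem~\ref{thm:n-players-medium-peak-uniform} requires $t > t^* \definedas p - (p(\pFloor+1)-1)/(\ceil{2m/n} - \pFloor)$. The inequality $p(\pFloor+1) > 1$ is built into the definition $\pFloor = \floor{1/p}$, and the denominator is positive because the hypothesis $p > 1/(\ceil{2m/n}-1)$ forces $\pFloor \le \ceil{2m/n}-2$, so the subtracted fraction is strictly positive and $t^* < p \le t_0$. For the large-peak range $p > 1/\ceil{m/n}$, Theorem~\ref{thm:n-players-large-peak-uniform} requires either $t \ge t_0$ (when $m \not\equiv 0 \pmod n$), which is met with equality, or $t \ge (1-n/m)(n/m) + \epsilon$ (when $n$ divides $m$); writing $m = kn$ in the latter case, $t_0 = 1/k$ strictly exceeds $(k-1)/k^2$, so a sufficiently small $\epsilon$ keeps $t_0$ admissible.

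Having pinned down the uniform threshold $t_0$, I realize it by partitioning $Q$ into $\ceil{m/\ceil{m/n}}$ blocks of size at most $\ceil{m/n}$ and assigning to each block one static document that places weight $t_0$ on every query in its block and $0$ elsewhere. The per-document budget $\sum_j d^j \le 1$ is satisfied because $\ceil{m/n} \cdot t_0 = 1$, and the corpus has size $\ceil{m/\ceil{m/n}} \le n$ by monotonicity of $\ceil{\cdot}$ together with $m/\ceil{m/n} \le m/(m/n) = n$. The main obstacle is the algebraic verification that $t^* < t_0$ in the medium-peak regime, together with the bookkeeping around the $\epsilon$ in the divisible sub-case of the large-peak regime; once the threshold value is fixed, the packing argument is routine.
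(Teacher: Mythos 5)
Your overall strategy --- case-split on the three peak regimes, pick a uniform threshold meeting the relevant theorem's lower bound, then pack the thresholds into documents of budget $1$ --- is sound and is, in explicit form, what the paper does implicitly (the paper's own proof is a one-line bound $\sum_j t_j \le \sum_j w_j(s) \le \sum_i \sum_j d_i^j \le n$, leaving the packing into at most $n$ documents unstated; your explicit partition into $\ceil{m/\ceil{\frac{m}{n}}} \le n$ blocks is the right way to justify that step). The small-peak and large-peak cases of your argument are fine.

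The gap is in the medium-peak regime. There you choose $t_0 = \frac{1}{\ceil{m/n}}$, which satisfies $t_0 \ge p$ and can satisfy $t_0 > p$ strictly (e.g.\ $n=2$, $m=5$, $p=0.3$ gives $t_0 = \frac{1}{3} > p$). Theorem~\ref{thm:n-players-medium-peak-uniform} as printed only states the lower bound $t > t^*$, but its proof derives $t \le p$ as an additional necessary condition, and the equilibrium it exhibits has winning value exactly $p$ in every query. A static document of value $t_0 > p$ therefore defeats those winners (a player is a winner in $q_j$ only if $t \le d_i^j$), so the intended equilibrium collapses and the theorem cannot be invoked at face value for $t_0 > p$; indeed, with a threshold above the peak no publisher can both clear the threshold and attain the peak score, and the existence claim fails. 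The fix is cheap: since you already verified $t^* < p$, choose any $t \in (t^*, p]$ (e.g.\ $t = p$). The packing step survives unchanged because $\ceil{\frac{m}{n}} \cdot t \le \ceil{\frac{m}{n}} \cdot p \le 1$ in this regime, so $\ceil{m/\ceil{\frac{m}{n}}} \le n$ static documents still suffice. With that substitution your proof is complete and, arguably, more careful than the paper's on the threshold-to-document conversion.
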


\subsection{Social Welfare}

\begin{figure}
    \centering
    \begin{subfigure}{0.25\linewidth}
        \centering
        \includegraphics[width=0.9\linewidth]{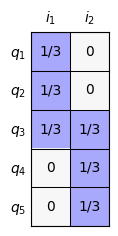}
        \caption{$t=\frac{1}{3}$}
        \label{subfig:social-welfare-example-game1}
    \end{subfigure}%
    \begin{subfigure}{0.25\linewidth}
        \centering
         \includegraphics[width=0.9\linewidth]{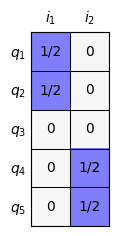}
        \caption{$t=\frac{1}{3} + \epsilon$}
        \label{subfig:social-welfare-example-game2}
    \end{subfigure}
    \caption{Illustration of different equilibrium profiles in the game $G_t=\langle 2, 5, 0.5, t \rangle$ for varying $t$ values.}
    \label{fig:social-welfare-example}

\end{figure}

The publishers' exposure in the equilibria we found is always maximized since only original publishers emerge as winners. 
Interestingly, while one might expect a tradeoff between the publishers' and users' welfare, we show that our proposed method of corpus enrichment can maximize both.

The selection of $\plantedDocsSet$ influences not only the existence of an equilibrium but also its pattern. 
Figure~\ref{fig:social-welfare-example} illustrates an example of a game where even a minimal increase in the threshold value substantially changes the equilibrium profile. 
In the first game (Figure~\ref{subfig:social-welfare-example-game1}) the winning value for all queries is exactly $\frac{1}{3}$. 
In the second game (Figure~\ref{subfig:social-welfare-example-game2}) the winners' scores increased in $4$ out of $5$ queries, which benefits user welfare, but no original publisher wins in $q_3$.

Typical semantics in the literature when discussing multiple queries is of several aspects of the same topic, and the aim is to provide good coverage for all different aspects. 
For example, in the context of the "Paris Olympic Games", we might analyze multiple queries addressing different aspects such as events, venues, and participants.
Social welfare is defined as the distance from the peak value of the worst-performing query to emphasize coverage across all query aspects.
Formally, 
\begin{definition}[Social Welfare]
    \[
    SW(s) = - \max_{q_j \in Q} |p - w_j(s)|
    \]
\end{definition}

We show that the equilibria described in Theorems~\ref{thm:n-players-medium-peak-uniform} and~\ref{thm:n-players-large-peak-uniform} not only minimize the number of documents that need to be added to achieve an equilibrium but also maximize the users' social welfare.

\begin{restatable}{lemma}{SocialWelfareUniform}
\label{lem:uniform-social-welfare-maximization}
    The equilibria achieved in Theorems~\ref{thm:n-players-medium-peak-uniform} and~\ref{thm:n-players-large-peak-uniform} maximize both the users' social welfare and publishers' exposure. 
\end{restatable}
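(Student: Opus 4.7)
The plan is to verify the two maximality claims (publisher exposure and user social welfare) separately. Publisher exposure follows by inspection of the constructions: the equilibria in Theorems~\ref{thm:n-players-medium-peak-uniform} and~\ref{thm:n-players-large-peak-uniform} are built so that in every query at least one original publisher achieves score at least $\generalThreshold_j$, so by the model's tie-breaking rule all $m$ queries are won by original publishers, which is the maximum possible collective exposure.

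For user welfare I would split into the two regimes. In the regime of Theorem~\ref{thm:n-players-medium-peak-uniform} the equilibrium $s^*$ satisfies $w_j(s^*)=p$ for every $q_j$, so $SW(s^*)=0$; since $|p-w_j(s)|\ge 0$ always, we have $SW(s)\le 0$ for every profile $s$, and $s^*$ is trivially optimal. In the regime of Theorem~\ref{thm:n-players-large-peak-uniform} the equilibrium satisfies $w_j(s^*)=1/\ceil{m/n}$ for every $q_j$, yielding $SW(s^*)=-\bigl(p-1/\ceil{m/n}\bigr)$, and the nontrivial step is to rule out any profile achieving strictly higher welfare.

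For this upper bound I would use a pigeonhole-plus-averaging argument on an arbitrary profile $s$ in which publishers win every query (otherwise publisher exposure is already suboptimal and $s$ is not a valid comparison point). Such a profile has $\sum_{i=1}^{n}|J_i(s)|\ge m$, so some player $i^*$ satisfies $|J_{i^*}(s)|\ge\ceil{m/n}$. The budget constraint gives $\sum_{j\in J_{i^*}(s)} d_{i^*}^{\,j}\le 1$, so by averaging there exists $j^*\in J_{i^*}(s)$ with $w_{j^*}(s)=d_{i^*}^{\,j^*}\le 1/|J_{i^*}(s)|\le 1/\ceil{m/n}$. Since $1/\ceil{m/n}<p$ in this regime, $|p-w_{j^*}(s)|\ge p-1/\ceil{m/n}$, yielding $SW(s)\le -\bigl(p-1/\ceil{m/n}\bigr)=SW(s^*)$.

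The main obstacle is exactly this pigeonhole-plus-averaging step; it becomes self-contained once one observes that $w_j(s)$ denotes the winner's $d^j$-value rather than a retrieval score, so no property of the single-peaked function $f$ enters the bound, and the two edge cases of Theorem~\ref{thm:n-players-large-peak-uniform} ($m\bmod n=0$ versus $m\bmod n>0$) are handled uniformly because the argument uses only $\ceil{m/n}$.
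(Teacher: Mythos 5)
Your proof is correct and follows essentially the same route as the paper's: the medium-peak regime is immediate since $w_j(s^*)=p$ everywhere, and the large-peak regime is handled by the same budget-averaging/pigeonhole observation that any profile in which every query is won by a publisher must have some winning value at most $\frac{1}{\ceil{\frac{m}{n}}}$. The only difference is cosmetic: where you set aside profiles with an unwon query as invalid comparison points, the paper explicitly notes such profiles have welfare $-p$, which is even worse, so the conclusion is unaffected.
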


\section{Best-Response Dynamics}
\label{sec:best-response}

\begin{figure}
    \centering
    \begin{tikzpicture}[node distance=0.07\linewidth and -0.1\linewidth]
    \node[inner sep=0] (img1) 
    {\begin{subfigure}{0.25\linewidth}
        \centering
        \includegraphics[width=0.9\linewidth]{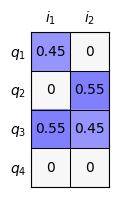}
        \caption{$s^0$}
        \label{subfig:fairness-example-s0}
    \end{subfigure}};
    \node[below left=of img1, inner sep=0] (img2)
    {\begin{subfigure}{0.25\linewidth}
        \centering
        \includegraphics[width=0.9\linewidth]{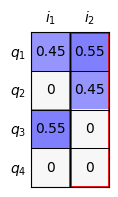}
        \caption{$s^1_{\text{unfair}}$}
        \label{subfig:fairness-example-s1-unfair}
    \end{subfigure}};
    \node[below right=of img1, inner sep=0] (img3)
    {\begin{subfigure}{0.25\linewidth}
        \centering
        \includegraphics[width=0.9\linewidth]{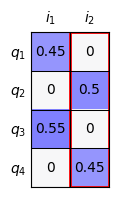}
        \caption{$s^1_{\text{fair}}$}
        \label{subfig:fairness-example-s1-fair}
    \end{subfigure}};

    \draw[->, thick] (img1) -- (img2);
    \draw[->, thick] (img1) -- (img3);
    \end{tikzpicture}
    
    \caption{Possible fair and unfair improvement steps in the game $G=\langle 2, 4, 1, 0.4 \rangle$, with the player that deviates marked in red.}
    \label{fig:fairness-importance-example}
\end{figure}

Best-response dynamics refers to the process in which players update their strategies iteratively when they have a profitable deviation from their current strategy. 
Specifically, at each time step, an arbitrary player among those who can improve their outcome selects a strategy that maximizes their utility. These dynamics continue until a state of equilibrium is reached, where no player has a profitable deviation. 
    
    In the absence of external intervention, such as corpus enrichment, best response dynamics are known to potentially fail in converging \cite{Haya1arxiv}, even when pure equilibrium exists. 
In this section, we prove that the best response dynamics do converge when using the corpus enrichment method for $\frac{1}{\floor{\frac{m}{n}}+1} < p$.

The \emph{best response} is a strategy that yields the highest payoff for a player, given the strategies chosen by other players. 
Formally, a strategy $d_i'$ is deemed a \emph{best response} of player $i$ if it satisfies $U_i(s_{-i}, d_i') \ge \max_{d_i \in D} U_i(s_{-i}, d_i)$.
A \emph{best response improvement step} is a profile $(s_{-i}, d_i')$, where $d_i'$ is the best response of player $i$.
Henceforth, we will refer simply to an improvement step when discussing best response improvement steps.
An \emph{improvement path} is defined as a sequence of such steps, denoted $\{s^l\}_{l=0}$.
In each step $l$, exactly one player deviates from his action in the previous step, $s^{l-1}$, and this player is indicated by $i_l$.  
The players making these deviations may vary across different steps.

We introduce a mild assumption to address tie-breaking when players have multiple viable deviation strategies to select from.

\begin{definition}[Deviation Equity]
    For any player $i$ and deviation $d_i'$, the {\em deviation equity} is defined as the utility of the worst-off player among all those whose utilities are decreased by the deviation. 
    Formally, 
    \begin{equation*}
        v(d_i^{'}, s) = \min_{i': U_{i'}(d_i^{'}, s_{-i}) < U_{i'}(s)} U_{i'}(d_i^{'}, s_{-i})
    \end{equation*}
    If no player was harmed by $i$'s deviation to $d_i^{'}$, $v(d_i^{'}, s)=\infty$.
\end{definition}

\begin{definition}[Fair Player]
    A player $i$ is \emph{\fairPlayer} if given multiple possible best response strategies, he selects the deviation $d_{i}^{br}$ that maximizes the \emph {deviation equity}.
\end{definition}

The fairness assumption is made to avoid trivial divergence as the one depicted in Figure~\ref{fig:fairness-importance-example}. 
In the example, two distinct improvement steps are considered, both resulting in the same utility gain for the deviating player.
When the player opts for an "unfair" strategy, the resulting strategy profile (Figure ~\ref{subfig:fairness-example-s1-unfair}) merely permutes the initial strategy profile (Figure ~\ref{subfig:fairness-example-s0}), and therefore convergence is not guaranteed. 
In contrast, when a fair strategy is employed, the resulting strategy profile (Figure~\ref{subfig:fairness-example-s1-fair}) is a pure equilibrium.

Under the fairness assumption, if $m \pmod{n}=0$, 
any best response deviation from a strategy by a player does not decrease the utility of others, leading to convergence in dynamics.

\begin{restatable}{theorem}{BestResponseSimple}
\label{thm:br-convergence-simple}
    Consider a game $G=\langle n, m, p, t \rangle$ where $m \pmod{n}=0$ and $p>\frac{1}{\floor{\frac{m}{n}}+1}$.
    Any sequence of best response dynamics $\{s^l\}_{l=0}$ where all players are fair players converges to a pure equilibrium $s^*$ in no more than $n$ steps.
\end{restatable}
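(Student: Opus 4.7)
Set $k \definedas m/n$, a positive integer by the hypothesis $m\pmod{n}=0$. The plan is to argue that under fair best-response dynamics every player deviates at most once: their single move raises their utility to the maximum $k$, and, crucially, does not decrease any opponent's utility. Since each of the $n$ players moves at most once, the dynamics terminates in at most $n$ steps at a pure Nash equilibrium. A preliminary step shows that $p > 1/(\floor{m/n}+1)$, together with the threshold constraints of Theorems~\ref{thm:n-players-medium-peak-uniform} and~\ref{thm:n-players-large-peak-uniform}, caps every player's utility at $k$ in every profile: any winning query costs mass at least the equilibrium winning value ($p$ in the medium-peak regime, $1/k$ in the large-peak regime), and $\floor{1/p}\le k$ limits the number of winning queries to $k$, while ties can only decrease the resulting utility. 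The equilibrium profiles of those two theorems attain $U_i = k$ exactly, so $k$ is indeed the tight cap.

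\noindent\textbf{Key lemma: harmless best responses exist.} The main step is the claim that for any profile $s$ and any player $i$ with $U_i(s) < k$, there is a best response $d_i^{'}$ of $i$ with $U_i(s_{-i}, d_i^{'})=k$ and $U_j(s_{-i}, d_i^{'}) \ge U_j(s)$ for every $j \ne i$. I would construct $d_i^{'}$ by selecting $k$ queries from the \emph{safe set} $S \definedas J_i^{\solo}(s) \cup E(s)$, where $E(s)$ is the set of queries with no current winner, and placing the required winning mass on each of them (zero elsewhere). Three observations drive the no-harm property: staying in a query of $J_i^{\solo}(s)$ harms nobody (only $i$ was winning there); entering a query of $E(s)$ harms nobody (no one was winning there); and vacating a query in $J_i^{tie}(s)$ only helps the remaining tie-partners, who become solo winners or tie with fewer players. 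The feasibility of the construction hinges on two facts: the counting inequality $|S| \ge k$, derived from the identity $\sum_j U_j(s) = m - |E(s)|$ combined with the per-player cap $U_j \le k$ and the integrality of $|E|$ and $|J_i^{\solo}|$; and the mass-budget bound $\sum_{j \in J} v_j \le 1$ for the chosen winning values $v_j$, guaranteed by the threshold conditions of Theorems~\ref{thm:n-players-medium-peak-uniform} and~\ref{thm:n-players-large-peak-uniform}. Since $d_i^{'}$ achieves the cap $k$ with deviation equity $\infty$, the fairness assumption forces player $i$ to choose it (or an equivalent harmless best response).

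\noindent\textbf{Closing and main obstacle.} Iterating the lemma: after any fair best-response move the deviating player sits at $k$ and never again has a profitable deviation, and by the no-harm property nobody who has already moved ever slips below $k$. Consequently each of the $n$ players moves at most once, giving at most $n$ steps in total, and the terminal profile is a pure Nash equilibrium by definition. The sharpest technical point is the counting inequality $|J_i^{\solo}(s)| + |E(s)| \ge k$ of the key lemma: the direct bound $|E(s)| + U_i(s) \ge k$, obtained by summing opponent utilities, is slack when $i$ participates in ties (which make $U_i(s)$ fractional and let $|J_i^{\solo}(s)|$ be strictly smaller than $U_i(s)$), and bridging this gap requires exploiting the integrality of $|E|$ and $|J_i^{\solo}|$ together with an explicit case analysis on the sizes of the tie groups in which $i$ participates. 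A subsidiary delicate point is that the chosen winning values $v_j$ must fit inside a mass budget of $1$, which is precisely what the threshold conditions of Theorems~\ref{thm:n-players-medium-peak-uniform} and~\ref{thm:n-players-large-peak-uniform} enforce.
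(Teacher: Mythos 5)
Your overall architecture --- every player has a best response that reaches the maximal utility $k=m/n$ while harming no one, fairness forces that choice, hence each player moves at most once and the dynamics stop within $n$ steps --- is exactly the paper's argument. However, the step on which everything rests, the per-player cap $U_i\le k$, is not established by your reasoning. Winning a query requires $d_i^j\ge t$, so the cap comes from the \emph{threshold}: one needs $t>\frac{1}{k+1}$ so that at most $k$ coordinates can exceed $t$. You instead derive the cap from $p>\frac{1}{k+1}$ and from "any winning query costs at least the equilibrium winning value," but in an arbitrary profile along the dynamics a query can be won with mass exactly $t$ (opponents may simply be absent there), and the minimal thresholds of Theorems~\ref{thm:n-players-medium-peak-uniform} and~\ref{thm:n-players-large-peak-uniform}, which you invoke, are strictly below $\frac{1}{k+1}$ when $m\bmod n=0$: Theorem~\ref{thm:n-players-large-peak-uniform} gives $t\ge(1-\frac1k)\frac1k+\epsilon<\frac{1}{k+1}$, and in the medium-peak regime the bound evaluates to $\frac{1-p}{k}$, which is below $\frac{1}{k+1}$ precisely because $p>\frac{1}{k+1}$. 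Under such a $t$ a player can surpass the threshold in $k+1$ queries, the cap fails, and the dynamics can cycle --- the paper's example $G=\langle 2,4,1,\frac13\rangle$ immediately after the theorem exhibits exactly this. The proof must assume $t>\frac{1}{\floor{m/n}+1}$ (as the paper's proof implicitly does); without that assumption your key lemma is false.

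A second, fixable soft spot is the safe set $S=J_i^{\solo}(s)\cup E(s)$. A query in $J_i^{\solo}(s)$ may contain an opponent whose value exceeds $t$ but is below $d_i^j$; retaining it then costs more than $t$ per query, and $k$ such retentions need not fit in the unit budget, so "placing the required winning mass" on $k$ queries of $S$ is not always feasible. The clean route (and the paper's) is to target only $\set{q_j : w_j(s_{-i})<t}$: since each of the other $n-1$ players exceeds $t$ in at most $k$ queries, at least $m-(n-1)k=k$ queries satisfy $w_j(s_{-i})<t$, each can be won solo at cost exactly $t\le\frac1k$, and no opponent loses anything. This single count also replaces your integrality and tie-group case analysis for $|J_i^{\solo}(s)|+|E(s)|\ge k$, which, while provable by counting opponents' above-threshold slots, is more machinery than the statement needs.
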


To prove convergence in the general case, we proceed by identifying two potential patterns for the deviation of $i_l$:
\begin{itemize}
    \item \textbf{\devTypeZ:} $i_l$ deviates to a strategy in which he is the sole winner in $\floor{\frac{m}{n}}$ queries. 
    \item \textbf{\devTypeZplusOne:} $i_l$ deviates to a strategy where he wins in $\floor{\frac{m}{n}}+1$ queries, with at least $\floor{\frac{m}{n}}-1$ of them uniquely.
\end{itemize}

In the following lemma we demonstrate that for any given strategy profile $s$, each player's best response strategy falls into one of these two patterns.
Let $x_{i}(s) = \{q_j \in Q: w_j(s_{-i})<t\}$ be the set of queries in which, under $s_{-{i}}$, there is no document published by an original player whose score surpasses the threshold.

\begin{restatable}{lemma}{BestResponsePattern}
\label{lem:br-pattern}
    Let $G=\langle n, m, p, t\rangle$ with $t \ge \frac{1}{\floor{\frac{m}{n}}+1}$.
    Given any strategy profile $s$, the best response of each player $i$ is either of \devTypeZ or \devTypeZplusOne.
    Specifically, if $|x_i(s)|  < \floor{\frac{m}{n}} - 1$ then the best response is of \devTypeZ; 
    if $|x_i(s)| > \floor{\frac{m}{n}} - 1$ it is of \devTypeZplusOne; and if $|x_i(s)|  = \floor{\frac{m}{n}} - 1$ it can be of either type. 
\end{restatable}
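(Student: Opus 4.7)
The plan is to exploit the budget constraint together with the threshold hypothesis $t \ge 1/(\floor{m/n}+1)$, writing $k = \floor{m/n}$ throughout. The key observation is that winning any query demands weight at least $t$, so the unit budget caps the number of winning queries at $\floor{1/t} \le k+1$, and the extreme value $k+1$ is achievable only by allocating weight exactly $t$ to each winning query (which forces the boundary $t = 1/(k+1)$). I would also partition queries relative to player $i$ into \emph{free} ones in $x_i(s)$, where weight $t$ suffices for a solo win because no other player reaches the threshold, and \emph{contested} ones, where a solo win demands weight strictly above $w_j(s_{-i}) \ge t$ while a tie requires matching $w_j(s_{-i})$ exactly.

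I would then case-split on $a = |x_i(s)|$. For $a < k - 1$: any $(k+1)$-win strategy is forced to have at most $a < k - 1$ unique wins, since including a contested solo (cost $> t$) alongside the necessary $\ge t$ per remaining win would push the total strictly past $(k+1)\, t \ge 1$; the remaining $k + 1 - a$ ties then contribute at most $1/2$ each, yielding overall utility at most $(a + k + 1)/2 < k$. Hence type 1 with utility $k$ dominates; it is feasible (in the operative regime $t \le 1/k$) by filling all free queries and adding $k - a$ cheapest contested solo wins. For $a \ge k$: I would realize a type 2 pattern either by securing $k+1$ free solo wins when $a \ge k+1$, or by taking $k$ free solo wins plus a tie on a contested query at level $t$ using the residual budget $1 - kt$ when $a = k$ and $t = 1/(k+1)$; either way the utility exceeds $k$, so type 2 is strictly best. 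For the transitional case $a = k - 1$: both patterns remain feasible --- type 1 via the $k-1$ free queries plus one contested solo, and type 2 via the $k-1$ free queries plus two ties at level $t$ --- and both attain utility $k$ under generic tie conditions, so either can serve as a best response.

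The main obstacle is the accounting of tie contributions, since the utility from tying at query $q_j$ equals $1/(h_j(s_{-i})+1)$ and depends on how many other players already sit at $w_j(s_{-i})$. In the $a \ge k$ analysis I would argue that the relevant tied queries sit exactly at threshold $t$ (otherwise they would exceed budget), which caps the crowding at that level via the other players' own unit budgets, ensuring that the tie contributions are large enough to push total utility past $k$. A secondary technicality is the single-peaked retrieval function $f$, but in the operative regime where $t \le p$ the score is monotone in weight on the relevant interval, so the weight-based combinatorial argument transfers directly to ranking scores.
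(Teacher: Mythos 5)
Your overall route is the same as the paper's: cap the number of winnable queries at $k+1 = \floor{\frac{m}{n}}+1$ via the unit budget, observe that $k+1$ wins force an exactly-$t$ allocation on every won query (hence every contested query among them can only be tied, not won solo), and case-split on $a = |x_i(s)|$ so that for $a \le k-2$ any $(k+1)$-win strategy carries at least three ties and utility at most $\frac{a+k+1}{2} < k$, while for $a \ge k$ a $(k+1)$-win strategy strictly beats $k$. This is precisely the paper's accounting, and your explicit treatment of the boundary subcase $a = k$ (which the paper's own write-up skips, jumping from $|x_i(s)| > \floor{\frac{m}{n}}$ directly to $|x_i(s)| = \floor{\frac{m}{n}}-1$) is a small improvement.

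The genuine gap is on the feasibility side. Your claim that \devTypeZ dominates when $a < k-1$ requires the value $k$ to actually be attainable, and you assert this ``by filling all free queries and adding $k-a$ cheapest contested solo wins'' without showing those contested wins are affordable; a priori every contested query could have $w_j(s_{-i})$ close to $1$, making even one such solo win impossible. The missing ingredient is a counting argument over the opponents' budgets: since $t \ge \frac{1}{k+1}$, each of the other $n-1$ players can place weight exceeding $t$ on at most $k$ queries, so at least $m - k(n-1) = k + (m \bmod n)$ queries satisfy $w_j(s_{-i}) \le t$, and each of these can be won solo (or at worst tied) at cost $t$ or marginally above. The same counting is what guarantees, in your $a = k$ subcase, that a contested query with $w_j(s_{-i})$ equal to exactly $t$ exists for the residual-budget tie you rely on. Without this step the ``cheapest contested solo wins'' clause is unsupported; with it, your argument closes and coincides with the paper's proof.
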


Next, we demonstrate that if, during a sequence of fair best responses, a player deviates for the second time, then no other player that did not have a beneficial deviation at this point, will have a beneficial deviation in the next step.
\begin{restatable}{lemma}{DeviationEquityBoundSecondDev}
\label{lem:deviation-equity-bound-second-dev}
    Let $G=\langle n, m, p, t\rangle$ with $t \ge \frac{1}{\ceil{\frac{m}{n}}}$.
    If player $i_l$ has previously deviated, then subsequent deviations by $i_l$ ensure that no player who did not have a profitable deviation before will obtain one after this deviation.
\end{restatable}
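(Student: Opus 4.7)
The plan is to combine Lemma~\ref{lem:br-pattern} with the fairness tie-breaking rule to show that whenever $i_l$ makes a second deviation, they can select a best response whose deviation equity is $\infty$---one that harms no other player---and fairness therefore compels them to do so. From this the conclusion follows immediately: every $i' \neq i_l$ retains the same utility and faces the same opponents' strategies except for $i_l$'s mass being moved into queries that are ``inert'' from $i'$'s perspective, so if $i'$ lacked a profitable deviation at $s^{l-1}$, they still lack one at $s^l$.

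First I would fix attention on the moment of $i_l$'s second deviation at step $l$. Since $i_l$'s prior deviation at some step $l' < l$ was a best response, by Lemma~\ref{lem:br-pattern} it was of type~1 or type~2, so immediately after step $l'$ player $i_l$ held mass strictly above $t$ on at least $\floor{\frac{m}{n}}$ queries. For the new deviation to be strictly profitable, another player must have acted between steps $l'$ and $l$ and unseated $i_l$ in some subset of queries $Y$; otherwise $i_l$'s current strategy would still be a best response and no deviation would be needed.

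Next I would construct an explicit harm-free best response. Using $t \ge \frac{1}{\ceil{m/n}}$ together with the unit mass budget, I would bound the number of queries that can simultaneously host a non-$i_l$ winner above threshold, and show that the union $x_{i_l}(s^{l-1}) \cup Y$ is large enough to accommodate a type-1 (or type-2) configuration for $i_l$ entirely within it. Placing $i_l$'s mass exclusively on queries in this union does not displace any original player's current winner, so the winning score of every other query---and hence every other player's win set and utility---is preserved. This option therefore realizes a best response with deviation equity $\infty$, and by fairness $i_l$'s actual choice also has deviation equity $\infty$. Consequently, for any $i' \neq i_l$ who had no profitable deviation at $s^{l-1}$, the utility $U_{i'}(s^l)$ equals $U_{i'}(s^{l-1})$, the opponent profile facing $i'$ differs from $s^{l-1}$ only on queries that $i_l$ now wins but were already contested identically from $i'$'s viewpoint, so the best alternative utility available to $i'$ is still at most $U_{i'}(s^l)$ and no new profitable deviation appears.

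The main obstacle will be the combinatorial step of verifying that $|x_{i_l}(s^{l-1}) \cup Y|$ always suffices to support the required type-1 or type-2 response. This demands a careful counting argument exploiting both the unit-mass budget and $t \ge \frac{1}{\ceil{m/n}}$, along with a case analysis on whether $m \bmod n$ is zero and whether the intermediate deviator used a type-1 or type-2 pattern. The tightest bookkeeping will be needed in the corner cases where $Y$ is as large as possible or where $x_{i_l}$ has been depleted by the intermediate move, and I would expect to handle these by showing that any shrinkage of $x_{i_l}$ is compensated by a proportional enlargement of $Y$.
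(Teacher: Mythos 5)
There is a genuine gap: your central claim --- that on a second deviation $i_l$ can always select a best response with deviation equity $\infty$, so that fairness forces a harm-free move --- is false. A best response here is constrained to be of \devTypeZ or \devTypeZplusOne (Lemma~\ref{lem:br-pattern}), i.e.\ it must win $\floor{\frac{m}{n}}$ or $\floor{\frac{m}{n}}+1$ queries, but the set $x_{i_l}(s)$ of queries that can be captured without touching any original player can be as small as $\floor{\frac{m}{n}}-1$ at the moment of a second deviation; the remaining one or two wins must then come from tying with (or overtaking) queries currently held by other original players, so \emph{every} best response harms someone and the maximal deviation equity is finite. This is exactly why the paper's Algorithm~\ref{alg:build-almost-fair-br} deliberately ties into $J_{i_{\max}}^{\solo}$ and why Lemma~\ref{lem:deviation-equity-bound} only guarantees equity $> \floor{\frac{m}{n}} - \frac{1}{2}$ rather than $\infty$; the boundary case $|x_{i_l}(s^l)| = \floor{\frac{m}{n}}-1$, $U_{i_l}(s^l) = \floor{\frac{m}{n}}-\frac{1}{2}$ treated explicitly in the paper's proof is a direct counterexample to your premise. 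Your fallback --- enlarging the harm-free region to $x_{i_l}(s^{l-1}) \cup Y$, where $Y$ is the set of queries in which $i_l$ was unseated --- does not help: a query in $Y$ is now held by the intermediate deviator at a score at least the threshold, so it lies outside $x_{i_l}$ precisely because reclaiming it harms that player.

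The argument the lemma actually needs is different in kind: one shows (via Lemma~\ref{lem:deviation-equity-bound}, which bounds how much $i_l$'s utility can have dropped since its last deviation, plus the fairness tie-break in the boundary case) that any second deviation is necessarily of \devTypeZplusOne, and then that a \devTypeZplusOne deviation harms others only by converting solo wins into ties, shaving $\frac{1}{2}$ off a player whose utility strictly exceeded $\floor{\frac{m}{n}}$ --- a player who either still has no profitable deviation afterwards or already had one beforehand. The harm is controlled, not absent. Your proposal, by assuming the harm away, skips the entire substance of the proof, and the ``combinatorial step'' you flag as the main obstacle cannot be completed as stated.
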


The proof idea relies on the fact that after a first deviation, any subsequent deviations are of \devTypeZplusOne, which satisfies this property. 
Hence, a player's deviation can trigger future deviation by another player at most once, which is key to the main Theorem's proof.

\begin{restatable}{theorem}{BestResponseConvergence}
\label{thm:br-convergence}
    Let $G=\langle n, m, p, t \rangle$ where $m\pmod{n} >0$, $p > \frac{1}{\floor{\frac{m}{n}}+1}$ and $t \ge \frac{1}{\ceil{\frac{m}{n}}}$.
    If all players are \fairPlayer, any sequence of best-response dynamics $\{s^l\}_{l=0}$ will converge to a pure equilibrium $s^*$ with rate $O(\min\{m, n^2\})$.
\end{restatable}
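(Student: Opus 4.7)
The plan is to bound the total length of any best-response sequence in which every player is \fairPlayer. Once finiteness is established, convergence to a pure Nash equilibrium follows at once: the dynamics halt precisely when no player has a profitable deviation, and under the stated hypotheses $p > 1/(\floor{\frac{m}{n}}+1)$ and $t \ge 1/\ceil{\frac{m}{n}}$ such an equilibrium is guaranteed to exist by Theorem~\ref{thm:n-players-large-peak-uniform}.

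I would set up the bookkeeping by letting $A^l = \{i \in \{1,\dots,n\} : i \text{ has a profitable deviation from } s^l\}$. The sequence halts exactly when $A^l = \emptyset$; every deviator $i_{l+1}$ lies in $A^l$; and because $i_{l+1}$ plays an optimal response, we have $i_{l+1} \notin A^{l+1}$. Consequently the number of deviations by a player $j$ is bounded by the number of indices $l$ at which $j$ enters $A$, i.e., $j \in A^l \setminus A^{l-1}$ (taking $A^{-1} = \emptyset$). For the $O(n^2)$ bound I would invoke Lemma~\ref{lem:deviation-equity-bound-second-dev} contrapositively: if any player $j \neq i_l$ enters $A$ at step $l > 0$, then $l$ must be the \emph{first} deviation of $i_l$ in the entire sequence. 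Since there are at most $n$ first-deviation events, each triggering at most $n-1$ new entries, and $|A^0| \le n$ accounts for the initial entries, the total number of deviations is at most $n + n(n-1) = n^2$.

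For the complementary $O(m)$ bound I would exploit the rigid type~1/type~2 classification of Lemma~\ref{lem:br-pattern}: at every deviation the deviator locks in either $\floor{\frac{m}{n}}$ or $\floor{\frac{m}{n}}+1$ winning queries, and the threshold constraint $t \ge 1/\ceil{\frac{m}{n}}$ pins down the admissible winning values. I would design a potential $\Phi$ taking non-negative integer values, bounded above by $O(m)$, and strictly decreasing with every deviation. A natural candidate is a lexicographic sorting of the per-player win-count vector $(|J_i(s)|)_i$ measured against the balanced terminal distribution with exactly $m \bmod n$ players at $\floor{\frac{m}{n}}+1$ wins; the fairness assumption together with Lemma~\ref{lem:br-pattern} would then rule out the cycling of Figure~\ref{fig:fairness-importance-example} that could otherwise let $\Phi$ oscillate.

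The principal difficulty lies in this last step: verifying that $\Phi$ is strictly monotone, since a type~2 deviation reshuffles wins between the deviator and third-party players---possibly pushing one from a solo win to a tie or below the threshold---and showing that the net effect is always a strict reduction in $\Phi$ will require combining the characterization of $x_i(s)$ from Lemma~\ref{lem:br-pattern} with the deviation equity condition. By contrast, the $O(n^2)$ bound is an almost mechanical consequence of Lemma~\ref{lem:deviation-equity-bound-second-dev} combined with the entry-counting above.
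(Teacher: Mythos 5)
Your $O(n^2)$ half is correct and is essentially the paper's own argument: both rest on reading Lemma~\ref{lem:deviation-equity-bound-second-dev} contrapositively, so that a player can (re-)enter the set of players holding a profitable deviation only at the initial profile or at a step that is some player's \emph{first} deviation, and then counting entries. (The paper organizes the count as a recursion $l_i \le l_{i-1} + i$ on the first times at which $i$ distinct players have deviated, which gives the slightly sharper $\frac{n(n+3)}{2}$; your $n + n(n-1)$ is the same idea with a looser constant.)

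The $O(m)$ half is where the proposal has a genuine gap. You propose a lexicographic potential $\Phi$ on the sorted win-count vector and explicitly defer the verification that it strictly decreases; but that verification \emph{is} the substantive content of the bound, and it is doubtful the candidate works as stated. Best-response deviations in this game can permute wins among players without changing the sorted multiset $(|J_i(s)|)_i$ --- this is exactly the permutation-cycling phenomenon exhibited in Figures~\ref{fig:fairness-importance-example} and~\ref{fig:general-best-response-example} --- so any potential depending only on sorted win counts cannot be strictly monotone step by step; to break the symmetry it would have to encode which players have already deviated, at which point you are back to the entry-counting argument. The paper obtains $O(m)$ with no potential function at all, by refining the very same count used for $O(n^2)$: a first deviation of \devTypeZ makes the deviator the sole winner in $\floor{\frac{m}{n}}$ queries and can therefore harm (hence newly activate) at most $z_1 = \floor{\frac{m}{n}}$ other players, so the recursion tightens to $l_i \le l_{i-1} + \min\{i, z_1\}$, which sums to $l_n \le n \cdot z_1 \le m$. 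Replacing your potential-function sketch with this one-line refinement closes the gap; as written, the $O(\min\{m,n^2\})$ rate is only established when $n^2 \le m$.
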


For lower peak values, 
best-response dynamics do not necessarily converge. 
Consider, for instance, $G=\langle 2, 3, \frac{1}{2}, \frac{1}{2}-\epsilon \rangle$, where $\epsilon>0$.
Starting from an initial strategy profile  $s_0 = ((\frac{1}{2} - \epsilon,0, \frac{1}{2}), (0,\frac{1}{2}, \frac{1}{2} - \epsilon))$, a possible deviation might lead to the profile $s_1 = ((\frac{1}{2} - \epsilon,0, \frac{1}{2}), (\frac{1}{2}, \frac{1}{2} - \epsilon, 0))$. 
As \( s_1 \) is a permutation of \( s_0 \), the dynamics result in a cycle.

\subsection{Computationally Efficient Best Response Dynamics}

The strategy space is continious, so despite narrowing down the strategy patterns for any best-response deviation, there can still be infinitely many possible strategies.
We therefore provide an efficient algorithm that calculates a best-response strategy,
and show that using this algorithm, the fairness assumption can be relaxed, and that this relaxation is enough to obtain convergence.

\begin{algorithm}
\caption{(Almost) fair best response strategy construction for player $i_l$}
\begin{algorithmic}
    \STATE Initialize $d_{br}^j=0$ for all $q_j\in Q$.
    \STATE Set $d_{br}^j = t$ for all $q_j \in x_{i_l}(s)$.
    \IF {$|x_{i_l}(s)| \ge \floor{\frac{m}{n}}-1$}
        \STATE numQueriesToAdd = $\floor{\frac{m}{n}}+1 - |x_{i_l}(s)|$
        \STATE $\epsilon = 0$
    \ELSE
        \STATE numQueriesToAdd = $\floor{\frac{m}{n}} - |x_{i_l}(s)|$
        \STATE $\epsilon = \frac{1}{\floor{\frac{m}{n}}(\floor{\frac{m}{n}}+1)}$
    \ENDIF

    \FOR {$1$ \textbf{to} numQueriesToAdd}
        \STATE $i_{\max} = \arg\max_{i \neq i_l, |J_i^{tie}(s_{-i_l}, d_{br}) \le 1} |J_i(s_{-i_l}, d_{br})|$
        \STATE $q_{j_{\text{add}}}$ = arbitrary element of $J_{i_{\max}}^{\solo}(s_{-i_l}, d_{br})$
        \STATE $d_{br}^{j_{\text{add}}} = t + \epsilon$
    \ENDFOR

    \RETURN $d_{br}$
\end{algorithmic}
\label{alg:build-almost-fair-br}
\end{algorithm}

Suppose a player can access the identity of original documents with highest score for each query, as well as their score. 
It is therefore easy to see that algorithm~\ref{alg:build-almost-fair-br} runs in linear time and outputs a best-response strategy $d_{i_l}^{alg}$.
It remains to show that this best-response startegy approximates deviation equity in a way that suffices to obtain our main result.
This is guaranteed by the following Lemma:  
\begin{restatable}{lemma}{DeviationEquityBound}
\label{lem:deviation-equity-bound}
    Let $G=\langle n, m, p, t\rangle$ with $t \ge \frac{1}{\floor{\frac{m}{n}}}$.
    If $i_l$ performs a deviation of \devTypeZ, then 
    \begin{equation*}
        \floor{\frac{m}{n}} - 1 < v(d_{i_l}^{alg}, s^l) \le \max_{d_{i_l}^{br}} v(d_{i_l}^{br}, s^l) 
    \end{equation*}
    and if the deviation is of \devTypeZplusOne, then
    \begin{equation*}
        \floor{\frac{m}{n}} - \frac{1}{2} < v(d_{i_l}^{alg}, s^l) \le \max_{d_{i_l}^{br}} v(d_{i_l}^{br}, s^l) 
    \end{equation*}
\end{restatable}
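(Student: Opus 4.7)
The plan is to split the proof into two parts: first, verify that $d_{i_l}^{alg}$ output by Algorithm~\ref{alg:build-almost-fair-br} is a valid best-response strategy realizing the pattern that Lemma~\ref{lem:br-pattern} prescribes, and second, bound the deviation equity from below. The upper bound $v(d_{i_l}^{alg}, s^l) \le \max_{d_{i_l}^{br}} v(d_{i_l}^{br}, s^l)$ is then immediate since $d_{i_l}^{alg}$ is one specific best response and the right-hand side takes a max over all best responses.

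For validity, set $k = \floor{m/n}$. The algorithm branches on $|x_{i_l}(s)|$ relative to $k-1$: when $|x_{i_l}(s)| < k-1$, it constructs a \devTypeZ strategy by assigning $d_{br}^j = t$ on $x_{i_l}(s)$ and $d_{br}^j = t+\epsilon$ (with $\epsilon = 1/(k(k+1))$) on $k-|x_{i_l}(s)|$ greedily chosen queries; otherwise, it constructs a \devTypeZplusOne strategy with $\epsilon=0$ on $k+1-|x_{i_l}(s)|$ added queries. I would verify the budget $\sum_j d_{br}^j \le 1$ using the definition of $\epsilon$ together with $t \ge 1/k$, and check that each added query actually transfers a winning position to $i_l$ (strictly for \devTypeZ through the $\epsilon$ slack, and as a tie with the previous winner for \devTypeZplusOne). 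Together with Lemma~\ref{lem:br-pattern}, this matches the algorithm's output with one of the guaranteed best-response forms and so $d_{i_l}^{alg}$ attains the best-response utility.

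For the lower bound, observe that the only players whose utility can decrease are those selected as $i_{\max}^{(\ell)}$ in some iteration: queries in $x_{i_l}(s)$ had no prior real winner, and each pick targets $J_{i_{\max}}^{\solo}$, so no tied co-winners are collaterally affected. By the greedy rule, $i_{\max}^{(\ell)}$ maximizes $|J_{i'}|$ among players with $|J_{i'}^{tie}| \le 1$. I would argue via pigeonhole that $|J_{i_{\max}^{(\ell)}}| \ge k+1$ at every iteration: the total $\sum_{i' \neq i_l} |J_{i'}|$ in the current profile is at least $m - |x_{i_l}(s)| - (\ell-1)$, and using $\ell \le k - |x_{i_l}(s)|$ in the \devTypeZ case (respectively $\ell \le k+1 - |x_{i_l}(s)|$ together with the extra shared-win contributions in the \devTypeZplusOne case), dividing by $n-1$ exceeds $k$, forcing some eligible player's $|J|$ to be at least $k+1$. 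After the pick the selected player's $|J|$ drops to $\ge k$ and, since $|J^{tie}|\le 1$, the utility is $|J^{\solo}| + \tfrac{1}{2}|J^{tie}| \ge k - \tfrac{1}{2}$. For \devTypeZ, setting $d_{br}^j = t+\epsilon$ removes a solo win outright, a decrement of exactly $1$, yielding post-deviation utility $\ge k - \tfrac{1}{2} > k - 1$; for \devTypeZplusOne, setting $d_{br}^j = t$ only turns the previous solo win into a tie with $i_l$, a decrement of $\tfrac{1}{2}$, yielding $\ge k > k - \tfrac{1}{2}$.

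The main obstacle I expect is justifying that the constraint $|J_{i'}^{tie}| \le 1$ never eliminates all contenders for $i_{\max}$; this requires an auxiliary pigeonhole on tie queries to rule out the pathological situation in which every high-$|J|$ player has simultaneously accumulated many ties. A second delicate point is the tight \devTypeZ budget near $t = 1/k$, which must be checked using the precise value of $\epsilon$ together with $|x_{i_l}(s)| < k-1$ (which forces $k - |x_{i_l}(s)| \ge 2$ and thereby constrains how much extra weight the added $\epsilon$'s consume).
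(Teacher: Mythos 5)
Your proposal follows essentially the same route as the paper: run through the greedy iterations of Algorithm~\ref{alg:build-almost-fair-br}, show that each iteration harms exactly one player (by $1$ for a \devTypeZ pick, by $\tfrac{1}{2}$ for a \devTypeZplusOne pick), and lower-bound the harmed player's remaining utility by showing the victim entered the iteration with $|J|\ge \floor{\frac{m}{n}}+1$ and at most one tie. The one substantive difference is the quantity you average. You pigeonhole on $\sum_{i'\ne i_l}|J_{i'}|$, which gives you a player with $|J_{i'}|\ge \floor{\frac{m}{n}}+1$ but, as you correctly flag, does \emph{not} by itself guarantee that this player satisfies the eligibility constraint $|J_{i'}^{tie}|\le 1$ — this is the gap you leave open. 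The paper closes exactly this gap by averaging \emph{utilities} instead: since every query outside $x_{i_l}$ contributes its full unit of utility to the players other than $i_l$, one gets $\max_{i\ne i_l}U_i \ge \floor{\frac{m}{n}}+\frac{1}{n}$, and because $|J_i|\le \floor{\frac{m}{n}}+1$ for everyone, a utility strictly above $\floor{\frac{m}{n}}$ forces at least $\floor{\frac{m}{n}}$ \solo{} wins and hence at most one tie. So the max-utility player is automatically an eligible $i_{\max}$, no auxiliary pigeonhole on tie queries needed. If you replace your win-count average by this utility average, your argument goes through and coincides with the paper's.

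Two further remarks. First, your per-iteration accounting is actually more careful than the paper's (the paper asserts "this holds for every iteration" without tracking how earlier picks deplete the pool), so keep that part. Second, your worry about the \devTypeZ budget at $t=\frac{1}{\floor{m/n}}$ is well founded: under the hypothesis as literally stated, $\floor{\frac{m}{n}}\cdot t + (\floor{\frac{m}{n}}-|x_{i_l}|)\cdot\epsilon$ already exceeds $1$, and a \devTypeZplusOne deviation (weight $\ge t$ on $\floor{\frac{m}{n}}+1$ queries) is outright infeasible; the threshold condition is consistent with the rest of Section~\ref{sec:best-response} only if read as $t\ge\frac{1}{\floor{\frac{m}{n}}+1}$. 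The paper's proof does not address this, so you should carry out your budget check under that corrected hypothesis rather than try to force it at $t=\frac{1}{\floor{m/n}}$.
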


As it can be shown Lemma~\ref{lem:deviation-equity-bound} implies Lemma~\ref{lem:deviation-equity-bound-second-dev} we get the desired result.

\section{Generalized Corpus Extension}
\label{sec:general}

\begin{figure}

    \centering
    \begin{subfigure}{0.4\linewidth}
        \centering
        \includegraphics[width=0.9\linewidth]{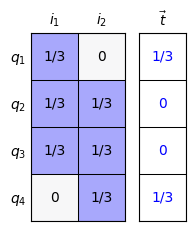}
        \caption{$\|\Vec{t}\|_1=\frac{2}{3}$}
        \label{subfig:sw-general-example-game1}
    \end{subfigure}%
    \begin{subfigure}{0.4\linewidth}
        \centering
        \includegraphics[width=0.9\linewidth]{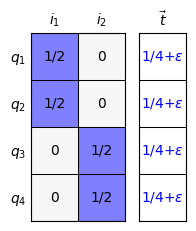}
        \caption{$\|\Vec{t}\|_1= 1 +  4 \cdot \epsilon$}
        \label{subfig:sw-general-example-game2}
    \end{subfigure}
    \caption{Illustration of different equilibrium profiles in the game $G_{\generalThreshold}=\langle 2, 4, 1, \generalThreshold\rangle$ for varying $\|\generalThreshold\|_1$ values.}
    \label{fig:social-welfare-general-example}
\end{figure}

\begin{figure*}
    \centering
    \begin{subfigure}{0.18\linewidth}
        \centering
        \tikz[remember picture] \node[inner sep=0pt] (figs0) {\includegraphics[width=\linewidth]{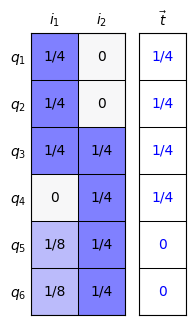}};
        \caption{$s_0$}
        \label{fig:s0}
    \end{subfigure}%
    \hspace{8mm}
    \begin{subfigure}{0.18\linewidth}
        \centering
        \tikz[remember picture] \node[inner sep=0pt] (figs1) {\includegraphics[width=\linewidth]{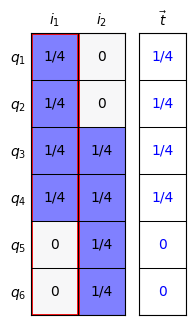}};
        \caption{$s_1$}
        \label{fig:s1}
    \end{subfigure}%
    \hspace{8mm}
    \begin{subfigure}{0.18\linewidth}
        \centering
        \tikz[remember picture] \node[inner sep=0pt] (figs2) {\includegraphics[width=\linewidth]{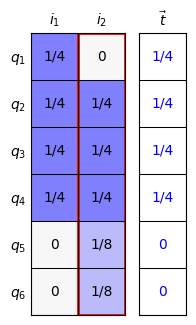}};
        \caption{$s_2$}
        \label{fig:s2}
    \end{subfigure}
    \hspace{8mm}
    \begin{subfigure}{0.18\linewidth}
        \centering
        \tikz[remember picture] \node[inner sep=0pt] (figs3) {\includegraphics[width=\linewidth]{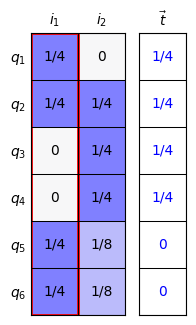}};
        \caption{$s_3$}
        \label{fig:s3}
    \end{subfigure}
    \caption{Improvement cycle in the game $G=\langle 2, 6, 1, (\frac{1}{4}, \frac{1}{4}, \frac{1}{4}, \frac{1}{4}, 0, 0)\rangle$, with deviations marked in red at each step.}
    \label{fig:general-best-response-example}
    
    \begin{tikzpicture}[overlay, remember picture]
        \draw[->, thick] (figs0.east) -- (figs1.west);
        \draw[->, thick] (figs1.east) -- (figs2.west);
        \draw[->, thick] (figs2.east) -- (figs3.west);
    \end{tikzpicture}
\end{figure*}

In the following section, we extend the results from Section~\ref{sec:uniform} by allowing different thresholds for distinct queries. 
This relaxation allows for achieving equilibrium with fewer added documents. 
For example, let $G_{\generalThreshold}=\langle 4, 5, 1, \generalThreshold\rangle$.
According to Theorem~\ref{thm:n-players-large-peak-uniform}, an equilibrium profile in the game with uniform corpus extension will be
\begin{equation*}
[s | t] = 
\left[
\begin{array}{cccc|c}
0.5 & 0.5 & 0 & 0 & 0.5\\
0.5 & 0.5 & 0 & 0 & 0.5 \\
0 & 0 & 0.5 & 0 & 0.5 \\
0 & 0 & 0.5 & 0.5 & 0.5 \\
0 & 0 & 0 & 0.5 & 0.5 \\
\end{array}
\right]
\end{equation*}

In the game above, using a uniform threshold requires the addition of three documents to reach equilibrium. 
However, considering that ties in query scores establish a "threshold" by themselves, we can optimize this by setting some scores to zero.
The revised equilibrium profile, which minimizes the number of added documents, is as follows:

\begin{equation*}
[s | \generalThreshold] = 
\left[
\begin{array}{cccc|c}
0.5 & 0.5 & 0 & 0 & 0\\
0.5 & 0.5 & 0 & 0 & 0 \\
0 & 0 & 0.5 & 0 & 0.5 \\
0 & 0 & 0.5 & 0.5 & 0 \\
0 & 0 & 0 & 0.5 & 0.5 \\
\end{array}
\right]
\end{equation*}

Employing non-uniform thresholds necessitates the addition of just one document, demonstrating a significant reduction in the number of dummy documents needed to achieve pure equilibrium existence. 
We proceed to analyze games with non-uniform thresholds.

\subsection{Equilibrium}
\label{sec:general-equilibrium}

The following theorems are generalizations of the corresponding theorems from the uniform case, which we discussed in Section~\ref{sec:uniform}.

\begin{restatable}{theorem}{MediumPeakGeneral}
\label{thm:n-players-medium-peak-general}
    $G_{\generalThreshold}=\langle n, m, p, \generalThreshold\rangle$ with $\frac{1}{\ceil{\frac{2m}{n}} - 1} < p \le \frac{1}{\floor{\frac{m}{n}}+1}$
    has a pure Nash equilibrium iff
    \begin{equation*}
        \| \generalThreshold \|_1 > n \cdot (1-p) - 2p \cdot (n\pFloor - m)
    \end{equation*}
\end{restatable}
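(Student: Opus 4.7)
The plan is to lift the analysis of Theorem~\ref{thm:n-players-medium-peak-uniform} from uniform to non-uniform thresholds by reducing existence of a pure equilibrium to a condition on the aggregate threshold mass $\|\generalThreshold\|_1$. I work in the simplified framework, so that Lemma~\ref{lem:translation} later converts the threshold-mass bound into a statement about document counts.

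For necessity, suppose $s^*$ is a pure equilibrium of $G_{\generalThreshold}$. The first step is to show that $w_j(s^*) = p$ for every query $q_j$. The argument mirrors the uniform case: in the medium-peak range $p \le \frac{1}{\floor{m/n}+1}$, one budget unit can purchase $\floor{m/n}+1 \le \pFloor$ winning bids at value $p$, so if any query's winning value were strictly less than $p$, its winner would have enough slack to outbid in one more query, giving a profitable deviation. In particular, every $\generalThreshold_j \le p$. Once $w_j(s^*) = p$ is fixed, each player $i$ spends exactly $p \cdot k_i$ on their $k_i = |J_i(s^*)|$ winning queries and carries residual $1 - p k_i$. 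I then encode the absence of a beneficial two-for-one swap — release one current win and purchase two new wins, each clearing its threshold — as a linear inequality in $\generalThreshold$ for each player and each candidate pair of target queries. Aggregating these inequalities across players and using $\sum_i k_i \ge m$ together with the per-player budget identity, the summation collapses to the claimed lower bound $\|\generalThreshold\|_1 > n(1-p) - 2p(n\pFloor - m)$. Intuitively, $n\pFloor - m$ measures the total slack in the system (how many additional wins the players could collectively absorb if running at their capacity of $\pFloor$), and the factor $2$ reflects that each unit of slack must be blocked by threshold investment in two distinct target queries, by the two-for-one structure of the deviation.

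For sufficiency, I construct an equilibrium explicitly when $\|\generalThreshold\|_1$ exceeds the bound by some $\epsilon > 0$. Writing $m = n \floor{m/n} + r$ with $0 \le r < n$, I partition the $m$ queries into $n$ windows of size $\floor{m/n}$ or $\floor{m/n}+1$, assign player $i$ to window $i$, and have them play $p$ on each query in their window and $0$ elsewhere. Thresholds are concentrated on a small set of buffer queries calibrated so that the aggregate mass matches the requirement and any two-for-one swap deviation is strictly unprofitable. The mild excess $\epsilon$ in $\|\generalThreshold\|_1$ absorbs the boundary cases where a player is exactly indifferent under the worst deviation.

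The main obstacle I anticipate is pinning down the coefficient $2$ on the correction term in the necessity direction. Extracting this factor requires a careful averaging argument: a given threshold $\generalThreshold_j$ can appear in blocking constraints for several players, and avoiding double counting while retaining tightness calls for a canonical matching between units of excess capacity and the pairs of target queries used to block them. A secondary concern is ruling out more exotic multi-query swap deviations (release two wins and acquire three, and so on) as binding constraints in this peak range; this should follow because such deviations require more thresholds to be cleared simultaneously at unfavorable exchange rates, but it must be checked explicitly before the summation argument can be claimed tight.
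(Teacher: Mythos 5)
Your high-level frame---force $w_j(s)=p$ on every query, write one blocking inequality in $\generalThreshold$ per player, and sum over players---is the same as the paper's. But the deviation you choose to encode is the wrong one, and this breaks the argument in both directions. Write $k=\pFloor$; in this peak range $(k+1)p>1$, so no player can ever hold $k+1$ queries all at value $p$, and a fortiori no player can hold more than $k+1$ wins. The binding deviation is therefore not a ``release one win, purchase two'' swap but the following: keep value $p$ on each query where you are currently one of two tied winners, drop your value to (essentially) $\generalThreshold_j$ on each query where you are the \emph{sole} winner---you still win there, since ties with static documents are resolved in favor of publishers and $w_j(s_{-i})$ is small---and spend the freed budget to take one additional query at $p$. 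Blocking this gives, per player, $\sum_{q_j\in J_i(s):\,h_j(s)=1}\generalThreshold_j \;>\; 1-(k+1)p+\beta_i(s)\,p$, where $\beta_i(s)$ counts $i$'s solo wins. The factor $2$ in the final bound then falls out of the combinatorial identity $\sum_i\beta_i(s)=2m-nk$ (every player wins exactly $k$ queries and every query has at most two winners in equilibrium), not from ``each unit of slack must be blocked in two distinct target queries''; the canonical-matching/averaging argument you anticipate needing is chasing a structure that is not present, and your secondary worry about release-two-acquire-three deviations is moot because utility is capped at $k+1$ wins outright.

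The sufficiency construction is also off. In any equilibrium of this regime each player must win exactly $k$ queries, and since $nk\ge m$ this forces exactly $nk-m$ queries with two tied winners and $2m-nk$ solo-win queries. Your disjoint windows of size $\floor{\frac{m}{n}}$ or $\floor{\frac{m}{n}}+1$ give each player only about $\frac{m}{n}<k$ wins and no ties, leaving a residual budget of up to $1-p\,\ceil{\frac{m}{n}}$ (which approaches $\tfrac12$ near the lower end of the peak range); a threshold vector of the stated total mass cannot absorb that much slack, so such a profile is not an equilibrium. The paper instead reuses the overlapping assignment from the uniform case (each player at $p$ on $k$ queries with the prescribed tie pattern), puts threshold mass only on the $2m-nk$ solo-win queries so that each per-player inequality is met, and observes that the summed bound is attained because $\sum_i\beta_i(s)$ is the same for every such profile.
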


\begin{restatable}{theorem}{LargePeakGeneral}
\label{thm:n-players-large-peak-general}
$G_{\generalThreshold}=\langle n, m, p, \generalThreshold\rangle$ where $\frac{1}{\floor{\frac{m}{n}}+1} < p$ has a pure Nash equilibrium iff there exists some $\epsilon>0$ such that 
    \begin{equation*}
        \|\Vec{t}\|_1 \ge \begin{cases}
            \frac{m-n}{\frac{m}{n}+1}, & m \mymod{n} = 0 \\
            \frac{m-n}{\ceil{\frac{m}{n}}} + \epsilon, & 1 \le m \mymod{n} \le n-2 \\
            n \cdot \left(1 - \frac{1}{\ceil{\frac{m}{n}}} \right) + \epsilon, & m \mymod{n} = n-1 \\
        \end{cases}
    \end{equation*}
\end{restatable}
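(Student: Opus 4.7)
I would mirror the structure of the uniform-threshold analog (Theorem~\ref{thm:n-players-large-peak-uniform}), handling necessity and sufficiency separately and exploiting the extra freedom from non-uniform thresholds for tighter bounds.

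For necessity, suppose $s^{*}$ is a pure Nash equilibrium. First, I would establish (via the same type of argument as in Theorem~\ref{thm:original-full}) that winning values satisfy $w_j(s^{*})\le 1/\ceil{m/n}$, since otherwise some player could profitably pick up an additional query. Next, for each player $i$, I would consider the ``drop a tie, gain a new query'' deviation: $i$ sets their bid on some $q_{j'}\in J_i^{tie}(s^{*})$ to zero and redirects the freed budget to an unoccupied query $q_{j^{*}}\notin J_i(s^{*})$ to secure a solo win. Infeasibility of this deviation yields, for each player $i$, a lower bound on $\sum_{j\in J_i^{\solo}(s^{*})}t_j$ of the form $|J_i^{\solo}(s^{*})|\cdot w$. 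Aggregating across all $n$ players, together with a Lemma~\ref{lem:translation}-style reduction showing that any positive threshold on a tied query can be removed without breaking the equilibrium, gives the claimed lower bound on $\|\vec{t}\|_1$. The three cases reflect how the $n\cdot\ceil{m/n}$ winning slots distribute among $m$ queries: $r=0$ yields a symmetric partition with exactly $n$ two-way ties and $m-n$ solo queries; $r=n-1$ leaves only a single ``short'' slot and so forces a near-maximal solo count; and $1\le r\le n-2$ is the intermediate regime.

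For sufficiency, I would construct an explicit equilibrium meeting each bound. Set the winning value $w=1/\ceil{m/n}$ and assign each player a set of $\ceil{m/n}$ winning queries so that the induced tie-structure has the correct counts of solo and tied queries. Each player bids $w$ on their winning queries, consuming their full budget. Place threshold $w$ on each solo query and $0$ on each tied query, with a small slack $\epsilon$ in the $r\ge1$ cases to foreclose borderline deviations. Verifying no-deviation reduces to checking the cheapest ``drop-tie-pick-up-solo'' improvement, whose cost $|J_i^{tie}|\cdot w+\sum_{j\in J_i^{\solo}}t_j+w+\epsilon$ exceeds $1$ precisely when the solo-threshold sum matches the bound. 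For $r=n-1$ the assignment is unavoidably asymmetric and more queries must be solo, producing the larger $n(1-1/\ceil{m/n})+\epsilon$ formula.

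The main obstacle is the combinatorics of the case split: both the lower-bound aggregation (selecting a ``critical'' deviation for each player without overcounting) and the matching construction (choosing assignments so that the induced tie-structure has exactly the required number of solo queries) are delicate. A secondary subtlety is showing that the $\epsilon$ cannot be dropped in the $r\ge1$ regimes: this requires exhibiting that at $\|\vec{t}\|_1$ equal to the stated bound without $\epsilon$, some drop-tie-pick-up deviation becomes just barely feasible, so strict inequality is necessary.
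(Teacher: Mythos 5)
Your overall architecture matches the paper's: per‑player budget‑exhaustion/deviation‑blocking inequalities that lower‑bound the solo‑query threshold mass $\tau_i(s)=\sum_{q_j\in J_i(s):h_j(s)=1}\Vec{t}_j$, the observation that tied queries need no threshold, aggregation over players, and an explicit matching construction. The paper's blocking deviation is "use any budget slack to break one of your own ties" rather than your "drop a tie and capture a new query," but both routes force $d_i^j=\max\{w_j(s_{-i}),\Vec{t}_j\}$ with the budget fully spent, so that difference is cosmetic.

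There is, however, a concrete gap in the $m \bmod n = 0$ case, on both directions. Your construction sets the winning value to $1/\ceil{\frac{m}{n}} = n/m$ with each player solo‑winning $m/n$ queries and no ties. Blocking the cheapest deviation there forces $\sum_{q_j\in J_i}\Vec{t}_j > 1 - n/m$ per player, i.e.\ $\|\Vec{t}\|_1 > n\left(1-\frac{n}{m}\right)=\frac{n(m-n)}{m}$, which is the \emph{uniform} bound of Theorem~\ref{thm:n-players-large-peak-uniform} and is strictly larger than the claimed $\frac{m-n}{\frac{m}{n}+1}=\frac{n(m-n)}{m+n}$. The paper's equilibrium instead uses winning value $\frac{1}{\floor{\frac{m}{n}}+1}$, with every player winning $\floor{\frac{m}{n}}+1$ queries, $\floor{\frac{m}{n}}-1$ of them solo and two of them in two‑way ties; the $n$ tied queries act as free thresholds and are exactly what buys the improvement from $\frac{n(m-n)}{m}$ down to $\frac{n(m-n)}{m+n}$. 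Symmetrically, your necessity step only establishes $w_j(s^*)\le 1/\ceil{\frac{m}{n}}$, which in the divisible case does not identify the minimizing tie structure: you must additionally optimize the aggregate bound $\|\Vec{t}\|_1 \ge n(1-w_m(s)) - n_2\cdot w_m(s)$ over the number $n_2$ of players with two ties, which is what selects $w_m(s)=\frac{1}{\floor{\frac{m}{n}}+1}$ and $n_2=n-(m\bmod n)$ and produces all three cases of the formula. Without that optimization your aggregation would prove the wrong (weaker) threshold in the divisible case and would not explain why the $\epsilon$ disappears exactly there ($n_0 = m\bmod n = 0$ means no player is solo in all of his queries, so no strict inequality is needed).
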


Since the generalized corpus extension method is a generalization of the uniform corpus extension method, we obtain:
\begin{corollary}
    Let $n\in \mathbb{N}$. 
    For any $m \in \mathbb{N}$ and $0 <p\le 1$ there exists some $\generalThreshold\in \mathbb{R}^m$ such that $\|\generalThreshold\|_1 \le n$ and
    the game $G_{\generalThreshold}=\langle n, m, p, \generalThreshold\rangle$ has a pure equilibrium.
\end{corollary}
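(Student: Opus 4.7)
The plan is, after disposing of the trivial case $m \le n$ (where a pure equilibrium already exists in the un-enriched game by the remark following Theorem~\ref{thm:original-full}, so $\generalThreshold = \vec 0$ suffices), to case-split on $p$ according to the three regimes covered by Theorems~\ref{thm:original-full}, \ref{thm:n-players-medium-peak-general}, and~\ref{thm:n-players-large-peak-general}. In each regime the corresponding characterization furnishes the minimal threshold $\ell_1$-norm required for a pure equilibrium; the corollary reduces to verifying that this minimum is strictly less than $n$, after which any $\generalThreshold \in [0,1]^m$ whose $\ell_1$-norm lies between the bound and $n$ witnesses the claim (since $n < m$, the required mass can be distributed without any coordinate exceeding $1$).

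Assuming henceforth $m > n$: if $p \le \frac{1}{\ceil{\frac{2m}{n}}-1}$, Theorem~\ref{thm:original-full} gives a pure equilibrium in the original game, so again $\generalThreshold = \vec 0$ works. If $\frac{1}{\ceil{\frac{2m}{n}}-1} < p \le \frac{1}{\floor{\frac{m}{n}}+1}$, Theorem~\ref{thm:n-players-medium-peak-general} reduces the task to showing $B = n(1-p) - 2p(n\pFloor - m)$ satisfies $B < n$. Rewriting $B = n - p(n + 2n\pFloor - 2m)$, this is equivalent to $n + 2n\pFloor > 2m$. The range of $p$ forces $\pFloor \ge \floor{\frac{m}{n}} + 1$, and combining this with the elementary bound $n\floor{\frac{m}{n}} \ge m - (n-1)$ yields $2n\pFloor \ge 2(m+1)$, whence $n + 2n\pFloor > 2m$ as required.

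For $p > \frac{1}{\floor{\frac{m}{n}}+1}$, Theorem~\ref{thm:n-players-large-peak-general} supplies three bounds according to $m \bmod n$. When $m \bmod n = 0$, the bound $\frac{n(m-n)}{m+n}$ is plainly below $n$. When $1 \le m \bmod n \le n-2$, the bound $\frac{m-n}{\ceil{\frac{m}{n}}} + \epsilon$ is below $n$ for sufficiently small $\epsilon$, since $n\ceil{\frac{m}{n}} \ge m > m-n$. When $m \bmod n = n-1$, the bound $n\bigl(1 - \frac{1}{\ceil{\frac{m}{n}}}\bigr) + \epsilon$ is likewise below $n$ for small $\epsilon$. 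In every case the required inequality holds strictly, and a suitable $\generalThreshold$ is built as described in the first paragraph.

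The main obstacle is the algebraic step in the medium-peak regime: turning the range restriction $p \le \frac{1}{\floor{\frac{m}{n}}+1}$ into the quantitative inequality $n + 2n\pFloor > 2m$ requires pairing $\pFloor \ge \floor{\frac{m}{n}}+1$ with the integer-division estimate $n\floor{\frac{m}{n}} \ge m - n + 1$. The remaining subcases are single-line inequality checks once the correct theorem is applied, and assembling $\generalThreshold$ from its $\ell_1$-budget is straightforward.
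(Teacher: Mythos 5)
Your argument is correct, but it takes a different route from the paper. The paper does not re-derive the bound in the general setting at all: it observes that uniform thresholds are a special case of general ones, so the corollary is inherited directly from the uniform-case corollary. That corollary, in turn, is proved by a single mass-counting inequality: in the equilibria constructed, every query is won by an original publisher, so each threshold is dominated by the winning value, giving $\sum_{j=1}^m t_j \le \sum_{j=1}^m w_j(s) \le \sum_{i=1}^n \sum_{j=1}^m d_i^j \le n$, with no case analysis on $p$ whatsoever. You instead verify regime by regime that the explicit lower bounds of Theorems~\ref{thm:original-full}, \ref{thm:n-players-medium-peak-general}, and~\ref{thm:n-players-large-peak-general} all sit strictly below $n$; your algebra checks out (in particular the medium-peak step, where $p \le \frac{1}{\floor{\frac{m}{n}}+1}$ gives $\pFloor \ge \floor{\frac{m}{n}}+1$ and hence $n\pFloor \ge m+1$, so $n + 2n\pFloor > 2m$ and the bound $n(1-p)-2p(n\pFloor-m)$ is strictly less than $n$). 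The paper's argument is shorter and explains \emph{why} the bound is $n$ (thresholds never need to exceed winning values, and total document mass is $n$); yours is more self-contained within the general-threshold characterizations and yields explicit margins below $n$ in each regime, at the cost of a three-way case split and reliance on the exact form of the bounds.
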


Comparing the threshold norms' bounds between uniform and generalized corpus extension methods shows that the difference in the lower bounds is most pronounced in games with a low query-to-player ratio. 
As this ratio increases, the difference diminishes, leading to the lower bounds on thresholds of the generalized method converging to those of the uniform method.

\subsection{Social Welfare}

In most instances, the equilibrium profile observed in the game with the general corpus extension method matches that obtained in the uniform setting. Nevertheless, there are specific situations where the equilibria differ, which impacts social welfare.
Figure~\ref{fig:social-welfare-general-example} illustrates an example of these differences by comparing two equilibrium profiles that arise under two different thresholds allocations, showing that social welfare may not be maximized in equilibrium when thresholds are non-uniform. 
This comparison highlights the tradeoff between minimizing the number of documents added to the corpus and maximizing social welfare.

\subsection{Best-Response Dynamics}

In Section~\ref{sec:best-response} we showed that the game with uniform corpus extension has a desired property: best-response dynamics convergence. 
However, this property does not extend to the generalized method. Figure~\ref{fig:general-best-response-example} presents an example from a game with general corpus enrichment featuring a non-uniform thresholds vector. As $s_3$ and $s_0$ are permutations of each other, the dynamics fail to converge.

\section{Discussion}

Our work introduced corpus enrichment as a technique for obtaining stable search and recommendation ecosystems. The resulting output maximizes social welfare, while only original content providers are selected, and the number of dummy documents needed is small. Moreover, we showed that for uniform corpus enrichment, this can be obtained by best-response dynamics. Generalized corpus enrichment allows to further restrict the number of dummy documents to be added to obtain stable output. 

A limitation of our results is that there are restrictions on the settings in which best-response dynamics converge. In particular, best-response dynamics does not always converge under the generalized corpus enrichment technique. While as discussed in Section \ref{sec:best-response} our results can be obtained under some minimal observability assumptions of winning documents and their score, it is still a significant limitation, which may be of interest to further relax. Further work may consider some restrictions on the best-response dynamics beyond the notion of fairness that may suffice for convergence. One other possible extension of our study is to consider corpus enrichment strategies in which social welfare does increase due to the introduction of few high-quality documents while relaxing the requirement that only content written by the original authors is selected.

\section*{Acknowledgements} 
This work was supported by funding from the European Research Council (ERC) under the European Union's Horizon 2020 research and innovation programme (grant agreement 740435).

\bibliography{refs/refs, refs/cunlp-ir,refs/fiana-bib}

\newpage
\appendix

\section{Omitted Proofs from Section~\ref{sec:model}}

\Translation*
\begin{proof}
    Let $\eta \in \mathbb{N}$.
    Assume that there exists a set of documents $T$ such that $|T| = \eta$ and the game $G_T = \langle n, m, p, T \rangle$ has a pure equilibrium.
    Denote by $\Vec{t}_j$ the value of the document in $T$ corresponding to the highest ranking score with respect to query $q_j$, i.e. $\Vec{t}_j \definedas \arg_{d_t \in T} \max {r(d_t, q_j)}$ for every $q_j \in Q$.
    Then $\|\Vec{t} \|_1 = \sum_{q_j \in Q} \Vec{t}_j \le \sum_{q_j \in Q} \sum_{d_t \in T} d_t^j \le \eta$, and the game $G_{\Vec{t}} = \langle n, m, p, \Vec{t} \rangle$ also has a pure equilibrium.
\end{proof}

\section{Omitted Proofs from Section~\ref{sec:uniform}}

\MediumPeakUniform*

\begin{proof}
    Let $k\in \mathbb{N}$ s.t. $\frac{1}{k+1} < p \le \frac{1}{k}$.
    
    In the original model \cite{Haya1arxiv}, there were two necessary conditions for the existence of an equilibrium:
    \begin{enumerate}
        \item The winning value must be $w_j(s)=p$ for every $q_j \in Q$.
        \item No player is the unique winner in more than one query. 
    \end{enumerate}

    For $p > \frac{1}{\ceil{\frac{2m}{n}}-1}$, there exists no strategy profile $s$ that satisfies both conditions, and therefore a pure equilibrium without corpus enrichment doesn't exist ~\cite{Haya1arxiv}. 
    
    For peak values in the range $\frac{1}{\ceil{\frac{2m}{n}}-1} < p \le \frac{1}{\ceil{\frac{m}{n}}}$, there exists a strategy profile $s$ for which the first condition holds, 
    but there will always be at least one player who is the unique winner in more than one query.

    It's easy to see that regardless of the threshold value, if there exists a query where $w_j(s)<p$, then every player can guarantee a utility of at least $k$. 
    Hence, the sum of utilities of all players would be at least $n\cdot k$.
    Therefore,the case where $n \cdot k > m$, $w_j(s)=p$ for every $q_j \in Q$ and $t\le p$.
    However, if $n \cdot k = m$, there exists a strategy profile where each player wins in $k+1$ queries uniquely, where $k-1$ are won solely by him and in the other $2$, there are 2 winners in each query. 
    For this strategy profile to be an equilibrium, it must hold that $t=\frac{1}{k+1}$.

    We will show that in the uniform case, there always exists a pure equilibrium profile where $w_1(s)=\ldots=w_m(s)=p$, and this strategy profile also minimizes the threshold.
    
    First, since $p \le \frac{1}{\ceil{\frac{m}{n}}}$ there exists a strategy profile $s$ where $w_1(s)=\ldots=w_m(s)=p$.
    Under $s$, every player is a winner in exactly $k$ queries, i.e., $|J_i(s)|=k$.
    If $s$ is an equilibrium, $h_j(s)\le 2$ for every query $q_j \in Q$, so there are exactly one or two winners in each query.
    Let $\beta_i(s)$ denote the number of queries in which player $i$ is the unique winner: $\beta_i(s) = |\{q_j \in J_i(s) : h_j(s)=1\}|$.
    We get that
    \begin{equation*}
    \begin{split}
        n \cdot k &= \sum_{q_j \in Q} h_j(s) \\
        &= \sum_{i=1}^n \beta_i(s) + 2 \cdot (m - \sum_{i=1}^n \beta_i(s)) \\
        &= 2\cdot m - \sum_{i=1}^n \beta_i(s)
    \end{split}
    \end{equation*}
    Therefore, the total number of queries where there is exactly one winner is 
    \begin{equation*}
        \sum_{i=1}^n \beta_i(s) = 2\cdot m - n \cdot k 
    \end{equation*}
    Denote $\beta_{max}(s) = \max_{1 \le i\le n}\beta_i(s)$.
    Then 
    \begin{equation}
    \label{eq:beta-max-lower-bound}
    \begin{split}
        \beta_{max}(s) &\ge \frac{1}{n} \sum_{i=1}^n \beta_i(s) = \frac{2m}{n}- k \\
        \Rightarrow \quad &  \beta_{max}(s) \ge \ceil{\frac{2m}{n}} - k
    \end{split}
    \end{equation}
    
    Since the winning value is $p$, any profitable deviation of a player would be one in which he becomes the winner in $k+1$ queries. 
    This will be achieved by publishing a document $d'$ defined as follows:

    \begin{equation*}
        {d_j'} = \begin{cases}
            p, & q_j \in \{q_j \in J_i(s) : h_j(s) \ge 2\} \\
            p, & q_j = q_{j_1} \\
            \max\{w_j(s_{-i})+\epsilon, t\}, & q_j \in \{q_j \in J_i(s) : h_j(s) = 1\}\\
            0, & \text{otherwise}\\
        \end{cases}
    \end{equation*}
    Where $q_{j_1}$ is some arbitrary query where player $i$ is not the current winner and $\epsilon>0$.
    If the deviation is well defined, i.e., $\sum_{q_j \in Q} {d_j'} \le 1$, then $i$ has a profitable deviation.
    However, if no player has such a deviation, then the profile $s$ is a pure equilibrium.

    Since $w_j(s_{-i}) \le 1 - k \cdot p$, if $t<w_j(s_{-i})+\epsilon$ then $\sum_{q_j \in Q} {d_j'} < 1$.
    Therefore, $\max\{w_j(s_{-i})+\epsilon, t\} = t$.

    In order for player $i$ to not be able to deviate profitably, it must hold that 
    \begin{equation}
    \label{eq:bound-on-t-with-beta-by-player}
        1 < \sum_{q_j \in Q} {d_j'} = (k - \beta_i(s)) \cdot p + \beta_i(s) \cdot t + p
    \end{equation}
    
    Thus, $s$ is a pure equilibrium only if
    \begin{equation}
    \label{eq:bound-on-t-with-beta-max}
        t > p - \frac{p \cdot (k+1) - 1}{\beta_{max}(s)}
    \end{equation}

    It's left to show that there exists some strategy profile $s^*$ for which $\beta_{max}(s^*) = \ceil{\frac{2m}{n}} - k$, which is the minimal possible value for $\beta_{max}(\cdot)$ according to Equation~\ref{eq:beta-max-lower-bound}.
    Such a strategy profile can be obtained, for example,  by running the Algorithm proposed for the original model~\cite{Haya1arxiv}.
    Hence, we get that the tightest lower bound on $t$ is 
    \begin{equation*}
        t > p - \frac{p \cdot (k+1) - 1}{ \ceil{\frac{2m}{n}} - k}
    \end{equation*}
\end{proof}

\LargePeakUniform*
\begin{proof}
    Let $z_1 = \floor{\frac{m}{n}}$.

    First, we show that if $t \ge \frac{1}{z_1+1}$, the game possesses a pure equilibrium $s$.
    Consider the profile $s=(d_1,\ldots, d_n)$, where each player publishes a document with $\frac{1}{z_1+1}$ for $z_1+1$ times. 
    There exists a strategy profile where each player is the unique winner in at least $z_1-1$ queries.
    This profile can be constructed, for example, by running the Algorithm proposed for the original model with $p'=\frac{1}{z_1+1}$ \cite{Haya1arxiv}.
    In the resulting profile, every player will be the unique winner in either $z_1$ or $z_1-1$ queries.
    It's easy to see that for $t = \frac{1}{z_1+1}$, this profile is a pure equilibrium. 
    Moreover, for every equilibrium where there exists a player $i$ such that $|J_i(s)|=z_1+1$, and $U_i(s) < z_1+1$, the bound on $t$ is tight.

    If $m \pmod{n} > 0$, there always exists such a player, so for $t < \frac{1}{z_1+1}$ there exists no pure equilibrium. 
    However, if $m \pmod{n} = 0$, the bound on $t$ can be further relaxed. 
    Specifically, there exists a strategy profile where every player is the unique winner in exactly $z_1$ queries. 
    
    A player $i$ will have a profitable deviation if she can become the winner in at least one more query in which she wasn't the winner before.
    Since the winning value $w_j(s)=\frac{1}{z_1}$ for all $q_j \in Q$, a player will have a profitable deviation iff
    \begin{equation*}
        z_1 \cdot t + \frac{1}{z_1} \le 1
    \end{equation*}
    Hence, we get that if $t> (1-\frac{1}{z_1})\cdot \frac{1}{z_1}$, then $s$ is a pure equilibrium. 
\end{proof}

\NDocsEquilibriumUniform*
\begin{proof}
    We showed that given any value of $m$ and $p$, there exists an equilibrium in which the winner is one of the publishers.
    Therefore,
    \begin{equation*}
        \sum_{j=1}^m t_j \le \sum_{j=1}^m w_j(s) \le \sum_{i=1}^n \sum_{j=1}^m d_i^j \le n \cdot 1
    \end{equation*}
    So there exists a document set $T$ such that $|T|\le n$ for which the game $G = \langle n, m, p, t \rangle$ has a pure equilibrium. 
\end{proof}

\SocialWelfareUniform*
\begin{proof} 
    For $\frac{1}{\ceil{\frac{2 \cdot m}{n}} - 1} < p \le \frac{1}{\ceil{\frac{m}{n}}}$, we saw that $w_j(s) = p$ for every query $q_j$, therefore the result is immediate. 

    For $\frac{1}{\ceil{\frac{m}{n}}} < p \le 1$, we have that $w_1(s) = \ldots = w_m(s) = \frac{1}{\ceil{\frac{m}{n}}}$. 
    Therefore $SW(s)=\frac{1}{\ceil{\frac{m}{n}}} - p$.

    For any profile $s'$, if there exists a player that wins in $x$ queries then there exists a query $q_{j_1}$ for which $w_{j_1}(s') \le \frac{1}{x}$.
    Therefore $SW(s') \le - |p - \frac{1}{x}|$.
    So for any profile $s'$ in which there exists a player that wins in more than $\ceil{\frac{m}{n}}$ queries, we get $SW(s')<SW(s)$.
    On the other hand, if all players win in less than $\ceil{\frac{m}{n}}$ queries, we get that there is at least one query for which there are no winners, so the social welfare is $-p$. 

    Therefore $\max_{s'} SW(s') \le \frac{1}{\ceil{\frac{m}{n}}} - p = SW(s)$.

\end{proof}
\section{Omitted Proofs from Section~\ref{sec:best-response}}

\BestResponseSimple*
\begin{proof}
    Let $z_1 = \frac{m}{n}$.
    Since $t>\frac{1}{z_1+1}$, for any strategy profile $s$ the number of queries in which a player is a winner is no more than $z_1$.
    Moreover, there will always be at least $z_1$ queries where $w_j(s^l_{-i_l}) < t$.
    Therefore any deviation step will be of \devTypeZ.
    So player $i_l$ can deviate without causing any other player's utility to decrease.
    Therefore after a player deviated once, he will no longer have a profitable deviation. 
    So each improvement path will be of a maximal length of $n$.
\end{proof}

We use an example to show that the bound on the threshold in theorem~\ref{thm:br-convergence-simple} is tight, i.e. if $t \le \frac{1}{\floor{\frac{m}{n}} + 1}$, the dynamics do not converge. 
Consider  $G=\langle n=2, m=4, p=1, t=\frac{1}{3} \rangle$, and $s^0 = ((\frac{1}{3}, \frac{1}{3}, \frac{1}{3}, 0), (0, 0, \frac{1}{2}, \frac{1}{3}))$. 
The following is a possible best response sequence (with \fairPlayer players): 
$s^1 = ((\frac{1}{3}, \frac{1}{3}, 0, \frac{1}{3}), (0, 0, \frac{1}{2}, \frac{1}{3}))$, 
$s^2 = ((\frac{1}{3}, \frac{1}{3}, 0, \frac{1}{3}), (0, 0, \frac{1}{3}, \frac{1}{2}))$.
However, note that $s^1$ is simply a permutation of $s^0$, therefore the dynamics can diverge.

We now proceed to prove the general case, where $m \pmod{n}>0$.

\BestResponsePattern*
\begin{proof}

    Let $i_{dev}$ represent the player deviating from the current strategy.
    Denote $z_1 = \floor{\frac{m}{n}}$ and $z_2 = m - n \cdot \floor{\frac{m}{n}}$.
    
    Since $t \ge \frac{1}{z_1+1}$, each document $d_i$ published by player $i$ must satisfy $\sum_{q_j \in Q} \indicator{d_i^j > t} \le z_1$.
    Consequently, there are at most $z_1 \cdot (n-1)$ queries where $w_j(s_{-i_{dev}}) > t$, 
    implying that in $s_{-i_{dev}}$ at least $m - z_1 \cdot (n-1)\ge z_1+z_2$ queries exist where $w_j(s_{-i_{dev}}) \le t$.
    Thus, $i_{dev}$ can always deviate to a strategy where they uniquely win in $z_1$ queries, corresponding to a deviation of \devTypeZ.

    Furthermore, since $t \ge \frac{1}{z_1+1}$, no player can win in more than $z_1 + 1$ queries. 
    We proceed to analyze the conditions under which $i_{dev}$ would prefer a deviation of \devTypeZplusOne:
    
    \begin{itemize}
        \item If $|x_{i_{dev}}(s)| \ge z_1 - 1$, $i_{dev}$ can publish a document in which he is the sole winner in exactly $|x_{i_{dev}}(s)|$ queries.
        \begin{itemize}
            \item If $|x_{i_{dev}}(s)| > z_1$, the utility is (strictly) higher than $z_1$, favoring a response of \devTypeZplusOne.
            \item If $|x_{i_{dev}}(s)| = z_1 - 1$, the utility is exactly $z_1$, making both \devTypeZ and \devTypeZplusOne possible deviation types.
        \end{itemize}
        \item If $|x_{i_{dev}}(s)| < z_1 - 1$, a strategy resulting in $|x_{i_{dev}}(s)| + 1$ wins would involve at least 3 queries where $i_{dev}$ does not win uniquely, yielding a utility below $z_1$.
        Conversely, a strategy where they uniquely win in $z_1$ queries yields a utility exactly equal to $z_1$. 
        Thus, the optimal strategy in this case is of \devTypeZ.
    \end{itemize}
\end{proof}

\DeviationEquityBoundSecondDev*

\begin{proof}
    Let $z_1 = \floor{\frac{m}{n}}$.
    Consider some player $i_l$ who performs a deviation of \devTypeZplusOne. 
    We argue that such a deviation will not create new profitable deviations for other players who previously had none. 
    This is because there always exists a player $i_{max}$ that achieves a utility greater than $z_1$. 
    If after any step $l$, $U_{i_{max}} < z_1$, then $i_{max}$ already had a profitable deviation prior to this step.

    Algorithm~\ref{alg:build-almost-fair-br} efficiently identifies strategies that maximize a player's utility. While these strategies do not necessarily maximize deviation equity, the deviation equity resulting from the algorithm serves as a lower bound for the maximal possible deviation equity. Therefore, we use the results from Lemma~\ref{lem:deviation-equity-bound} to establish that any subsequent deviations by player $i_l$ will be of \devTypeZplusOne if $i_l$ has deviated before.
    
    Let $l_{prev}$ be the last step in which $i_l$ deviated.
    According to Lemma~\ref{lem:br-pattern}, $i_l$’s prior deviation could have been either \devTypeZ or \devTypeZplusOne.
    
    If $i_l$'s previous deviation was of \devTypeZ, then immediately following that deviation at step $l_{prev}+1$, $i_l$’s utility was exactly $z_1$. 
    According to Lemma~\ref{lem:deviation-equity-bound}, $i_l$'s utility will not decrease in subsequent steps, maintaining $U_{i_l}(s^{l})=U_{i_l}(s^{l_{prev}})=z_1$.
    Therefore, if $i_l$ has a profitable deviation it means he can achieve a utility that exceeds $z_1$.
    Thus, every possible deviation for $i_l$ would be of \devTypeZplusOne.

    If $i_l$’s previous deviation was of \devTypeZplusOne, Lemma~\ref{lem:deviation-equity-bound} assures us that either $U_{i_l}(s) \ge z_1$ or $|x_{i_l}(s^l)| \ge z_1-1$. 
    If  $U_{i_l}(s) \ge z_1$, then as in the prior case, the only profitable deviation is \devTypeZplusOne.
    However, if $|x_{i_l}(s^l)| = z_1-1$ and $U_{i_l}(s)=z_1-\frac{1}{2}$, then both deviation types are theoretically  possible.
    Under the fairness condition, $i_l$ selects the one that maximizes the deviation equity, which is \devTypeZplusOne.
\end{proof}

\BestResponseConvergence*
\begin{proof}
    Let $\alpha_l$ denote the number of players who have deviated at least once by step $l$. 
    According to Lemma~\ref{lem:deviation-equity-bound-second-dev}, a deviation of \devTypeZplusOne by any player cannot induce a profitable deviation for players who hadn't had one before. 
    Consequently, the total number of deviations until a previously non-deviating player deviates for the first time is bounded by $\alpha_l + 1$.

    Denote by $l_i$ the step where $\alpha_{l} = i$ for the first time. 
    Setting $l_1=1$, we derive the following recursive formula: for every $1 \leq i \leq n$, $l_{i} \leq l_{i-1} + i$. 
    This implies that the overall process requires no more than $l_n + n$ steps for convergence, as no player will have a profitable deviation beyond this point.
    
    Recursively calculating $l_n$, we observe:
    \begin{equation}
    \label{eq:recursive-complexity}
        \begin{split}
            l_n &\le l_{n-1} + n \\
            &= (l_{n-2} + (n-1)) + n \\
            &\ldots \\
            &= l_1 + \sum_{i=2}^n i \\
            &= 1 + \sum_{i=2}^n i \\
            &= \frac{n(n+1)}{2} \\
        \end{split}
    \end{equation}
    Hence, the total steps until convergence, taking into account the final steps to reach stability, is $\frac{n(n+1)}{2} + n = \frac{n(n+3)}{2}$. 
    Therefore, the dynamics converge with complexity $O(n^2)$.

    If $\floor{\frac{m}{n}}< n$, we can potentially reduce the complexity of convergence.
    Denote $z_1 = \floor{\frac{m}{n}}$.
    In any deviation of \devTypeZ the number of affected players is at most $z_1$.
    Hence, the total number of deviations until a previously non-deviating player deviates for the first time is limited by $\min\{\alpha_l + 1, z_1\}$. 
    This constraint leads us to a modified recursive formula: for every $1 \leq i \leq n$, $l_{i} \leq l_{i-1} + \min\{i, z_1\}$.
    Following the same calculation as in Equation \ref{eq:recursive-complexity}, we get that $l_{z_1} \le \frac{z_1(z_1+1)}{2}$. 
    
    Beyond $z_1$, each subsequent $l_i$ only increases by $z_1$. 
    Therefore, we get:
    \begin{equation*}
        \begin{split}
            l_n &\le l_{n-1} + z_1 \\
            &\ldots \\
            &= l_{z_1} + (n-z_1) \cdot z_1 \\
            &\le \frac{z_1(z_1+1)}{2} + (n-z_1) \cdot z_1 \\
            &\le n \cdot z_1 \\
            &\le m
        \end{split}
    \end{equation*}
    Thus, the dynamics converge with a complexity of $O(m)$.
\end{proof}

\DeviationEquityBound*
\begin{proof}

We show that the deviation strategy obtained by Algorithm~\ref{alg:build-almost-fair-br} achieves the above lower bounds. 
In particular, we will show that the deviation can be performed greedily by adding one query every time to the set $J_{i_l}(s^{l+1})$.
Initially, $|J_{i_l}(s^{l+1})| = \min\{|x_{i_{l}}(s^l)|, z_1+1\}$.

In every iteration of the algorithm, the first step is to identify a player that wins in $z_1+1$ queries and is the unique winner in at least $z_1$ of them. 
It is therefore required to prove that such a player exists. 

It holds that $\max_{i \ne i_l}{U_i(s^l)} \ge z_1 + \frac{1}{n}$; otherwise, the sum of utilities will be less than $m$ - by contradiction.
Let $i_{max}$ denote a player that achieves this maximal utility. 
Note that $i_{max}$ is the unique winner in $z_1$ queries, and there is at most one query in which he is one of multiple winners. 

    If the deviation is of \devTypeZplusOne, $i_l$ will be one of two winners in $q_{add}$.
    Then $i_{max}$'s utility will now be:
    \begin{equation*}
         U_{i_{max}}(s^l_{-i_l}, d_{br}) = U_{i_{max}}(s^{l}) - \frac{1}{2} > z_1 - \frac{1}{2}
    \end{equation*}
    
    If the deviation is of \devTypeZ, $i_l$ will publish a document with a value strictly larger than $t$ in query $q_{add}$.
    Hence, the utility of $i_{max}$ will now decrease by $1$.
    So we get that 
    \begin{equation*}
         U_{i_{max}}(s^l_{-i_l}, d_{br}) = U_{i_{max}}(s^{l}) - \frac{1}{2} > z_1 - 1
    \end{equation*}

    This holds for every iteration; therefore, we get the result.
    Note that $i_{max}$ can repeat between different iterations, but the results still hold since the lower bound is independent of the exact utility. 
\end{proof}

\section{Omitted Proofs from Section~\ref{sec:general}}

\begin{figure*}[b]
    \centering
\includegraphics[width=\textwidth]{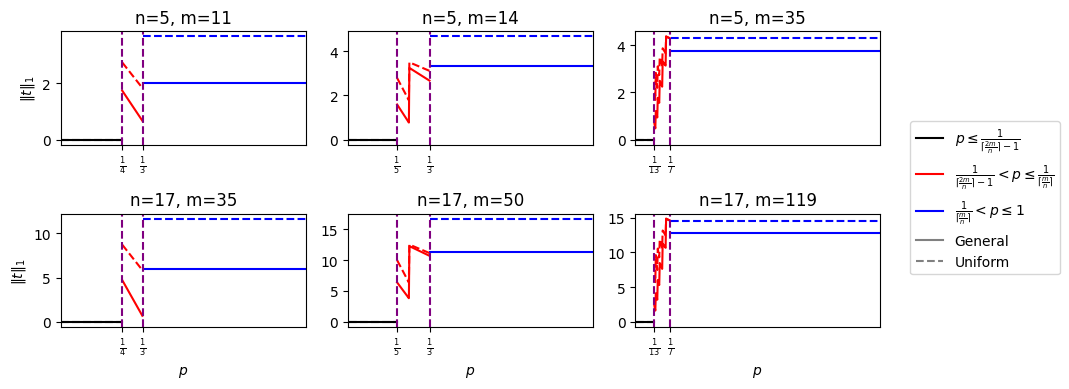}
    \caption{Estimated value of $\|t\|_1$ by different values of $p$.} 
    \label{fig:peak-threshold}
\end{figure*}

Figure~\ref{fig:peak-threshold} illustrates the differential impact of using uniform versus general thresholds on the number of documents required. 
The difference between the two methods is most noticeable in games where the ratio of queries to players is low. 
As this ratio increases, the difference between the two settings decreases, and the general thresholds converge towards the uniform threshold scenario.

\MediumPeakGeneral*
\begin{proof}

    We extend the analysis from the proof of Theorem~\ref{thm:n-players-medium-peak-uniform}, using the same notation. 
    In every strategy where $w_j(s)=p$ for all queries $q_j\in Q$, any profitable deviation of a player would be one in which he becomes the winner in $k+1$ queries.
    This can be achieved by publishing a document $d'$ defined as follows:
    
    \begin{equation*}
        {d_j'} = \begin{cases}
            p, & q_j \in \{q_j \in J_i(s) : h_j(s) \ge 2\} \bigcup \{q_{j_1}\} \\
            \Vec{t}_j , & q_j \in \{q_j \in J_i(s) : h_j(s) = 1\}\\
            0, & \text{otherwise}\\
        \end{cases}
    \end{equation*}

    Here, $q_{j_1}$ represents an arbitrary query where player $i$ is not currently winning, and $\epsilon > 0$. 
    Assuming that the deviation is feasible, i.e., $\sum_{q_j \in Q} {d_j'} \le 1$, player $i$ can profitably deviate. 
    If no player has such a deviation available, then strategy profile $s$ is a pure equilibrium.

    To ensure that player $i$ cannot profitably deviate, the following inequality must hold:
    \begin{equation*}
        1 < \sum_{q_j \in Q} {d_j'} = (k - \beta_i(s) + 1) \cdot p + \sum_{q_j \in J_i(s) : h_j(s) = 1} t_j
    \end{equation*}

    Therefore, $s$ is a pure equilibrium only if:
    \begin{equation*}
        \sum_{q_j \in J_i(s) : h_j(s) = 1} \Vec{t}_j > 1 - (k+1) \cdot p + \beta_i(s) \cdot p
    \end{equation*}

    In the case where thresholds are uniform across queries, the strategy that minimized $\|t\|_1$ was the strategy that minimized $\beta_{max}(s)$.
    For the general case, however, we demonstrate that the overall sum of thresholds remains constant irrespective of $\beta_i$ distribution, always equaling exactly $2m - nk$.
    
    Thus, the condition on the sum of thresholds can be derived as follows:
    \begin{equation*}
        \begin{split}
            \| \Vec{t} \|_1 &= \sum_{q_j\in Q} \Vec{t}_j \\
            &= \sum_{i=1}^n \sum_{q_j \in J_i(s) : h_j(s) = 1} \Vec{t}_j \\
            &> \sum_{i=1}^n  1 - (k+1) \cdot p + \beta_i(s) \cdot p \\
            &= n \cdot (1 - (k+1) \cdot p) + p \cdot\sum_{i=1}^n \beta_i(s) \\
            &= n \cdot(1 - (k+1) \cdot p) + p\cdot (2m - nk) \\
            &= n \cdot (1-p) - 2p \cdot (nk - m)
        \end{split}
    \end{equation*}

    The strategy profile demonstrated in the uniform scenario also achieves the minimum for this bound, confirming its tightness. 
\end{proof}

In scenarios where the threshold values are non-uniform, a key observation is that in any query $q_j$ with more than one winner, the threshold value $t_j$ can be $0$. 
Consequently, the minimal norm threshold vector, $\Vec{t}$, will have $t_j > 0$ if and only if there is exactly one winner in query $q_j$. 

Given that we consider only profiles where at least one of the highest-ranked documents originates from an actual player, we can define a specific measure for each player $i$ as follows:
\begin{equation*}
    \tau_i(s) \definedas \sum_{q_j \in J_i(s): h_j(s)=1} \Vec{t}_j
\end{equation*}

Consequently, the total norm of the threshold vector can be lower-bounded by the sum of these individual contributions:
\begin{equation*}
    \|\Vec{t}\|_1 \ge \sum_{i=1}^n \tau_i(s)
\end{equation*}

This relationship allows us to establish bounds on the threshold values by considering the possible strategic deviations of each player independently.

\LargePeakGeneral*

\begin{proof}
    Denote $z_1 = \floor{\frac{m}{n}}$ and $z_2 = m - n\cdot z_1$, so $m = n\cdot z_1 + z_2$.
    We begin the proof by deriving a lower bound on $\|\Vec{t}\|$.
    Assume $s$ is a pure equilibrium profile where, without loss of generality, $w_1(s) \ge \ldots \ge w_m(s)$.

    Define $n_{\phi}$ as the number of players for whom $\beta_i(s) = z_1+1-{\phi}$. 
    Analyzing the conditions under which a player $i$ has no profitable deviation, consider the following:

    For each player that is the sole winner in all of his queries, a profitable deviation would involve winning at least one additional query. 
    Therefore, to prevent profitable deviations, it must hold that:
    \begin{equation*}
        \begin{split}
            1 &< \tau_i(s) + \min_{q_j \notin J_i(s)} w_j(s) \\
        \end{split}
    \end{equation*}
    
    Rearranging, we obtain the following lower bound on $\tau_i(s)$:
        \begin{equation}
        \tau_i(s) >  1 - \min_{q_j \ne J_i(s)} w_j(s)
    \end{equation}

    For a player $i$ who does not win solely in all of his queries, let $q_{j_1}\in J_i(s)$ be a query where there are at least two winners. 
    Since an original player wins in every query, it follows that $w_{m}(s) < p$.
    If $w_{j_1}(s)=p$, then at least one of the players will have a profitable deviation to win uniquely in $w_m(s)$.
    Therefore $w_{j_1}(s) < p$.
    If there exists a query $q_j\in J_i(s)$ where $d_i^j > \max\{w_j(s_{-i}), \Vec{t}_j\}$, then player $i$ has a profitable deviation to a strategy in which he becomes the sole winner of query $q_{j_1}$, without changing his position for other queries. 
    Hence, we have:
    \begin{equation*}
        \begin{split}
            1 &=  \sum_{q_j \in Q} d_i^j \\
            & = \sum_{q_j \in J_i(s)} \max\{w_j(s_{-i}), \Vec{t}_j\}\\
            &= \sum_{q_j \in J_i(s) : h_j(s)=1} \Vec{t}_j + \sum_{q_j \in J_i(s) : h_j(s)\ge 2}  w_j(s_{-i}) \\
            &= \tau_i(s) + \sum_{q_j \in J_i(s) : h_j(s)\ge 2}  w_j(s_{-i}) \\
            &\ge \tau_i(s) + \sum_{q_j \in J_i(s) : h_j(s)\ge 2}  w_m(s) \\
        \end{split}
    \end{equation*}
    Rearranging, we get:
    \begin{equation}
        \tau_i(s) \ge 1 - \sum_{q_j \in J_i(s) : h_j(s)\ge 2} w_m(s)
    \end{equation}

    Hence, for $s$ to be an equilibrium:
    \begin{equation*}
        \begin{split}
            \| \Vec{t} \|_1 &\ge \sum_{i=1}^n \tau_i(s) \\
            &\ge n_0 \cdot \left(1 - w_m(s)\right) + n_0 \cdot \epsilon \\
            &\quad + n_1 \cdot \left(1 - w_m(s)\right) + n_2 \cdot \left(1 - 2 \cdot w_m(s)\right) \\
            &= n \cdot \left(1 - w_m(s)\right) - n_2 \cdot w_m(s) + n_0 \cdot \epsilon
        \end{split}
    \end{equation*}

    If $z_2>0$, it must hold that $\max_{1 \le i\le n} |J_i(s)| \ge z_1+1$, otherwise there will be a query where no original player wins.
    If $z_2=0$, there exists a strategy where $\max_{1 \le i\le n} |J_i(s)| = z_1$ - the one we showed in the uniform case.
    However, we will show that a strategy where $\max_{1 \le i\le n} |J_i(s)| \ge z_1+1$ exists in this case too, and it achieves a better bound on $t$.

    If $\max_{1 \le i \le n} |J_i(s)| \ge z_1+1$, then $w_m(s) \le \frac{1}{z_1+1}$. 
    Plugging in the above gives:
    \begin{equation}
    \label{eq:bound-t-norm}
        \| \Vec{t} \|_1 \ge n \cdot \left(1 - \frac{1}{z_1+1}\right) - n_2 \cdot \frac{1}{z_1+1} + n_0 \cdot \epsilon
    \end{equation}

    We demonstrate equilibrium profiles meeting this condition by implementing a strategy where $w_j(s) = \frac{1}{z_1+1}$ for all $q_j \in Q$. 
    The number of queries in which there are two winners is independent of the exact arrangement of $s$, and is $n-z_2$.

    If $0 \le z_2 \le n-2$, a strategy profile $s$ exists such that $z_2$ players each uniquely win in exactly $z_1+1$ queries, while $n-z_2$ players each uniquely win in $z_1-1$ queries, and two other queries resulting in a tie between two players. 
    Consequently, it follows that $n_2 = n-z_2$, $n_1=0$, and $n_0 = z_2$. 
    If $z_2 = n-1$, only one query results in a tie between two winners, leading to $n_1=2$ and $n_0 = n-2$.

    Upon applying these strategies and configurations, we confirm through Equation~\ref{eq:bound-t-norm} that these profiles indeed achieve the lower bound established for the norm of $\Vec{t}$.

\end{proof}

\end{document}